\providecommand{\U}[1]{\protect\rule{.1in}{.1in}}
\providecommand{\U}[1]{\protect\rule{.1in}{.1in}}
\newtheorem{assumption}{Assumption}
\newtheorem{theorem}{Theorem}
\newtheorem{lemma}{Lemma}
\newtheorem{proposition}{Proposition}
\newtheorem{remark}{Remark}
\newtheorem{definition}{Definition}
\useunder{\uline}{\ul}{}
\newcommand{\multiline}[1]{  \begin{tabularx}{\dimexpr\linewidth-\ALG@thistlm}[t]{@{}X@{}}
#1
\end{tabularx}
}
\setlist[itemize]{leftmargin=*}
\newcommand{\R}{\mathbb{R}}
\newcommand{\N}{\mathbb{N}}
\newcommand{\T}{\top}
\newcommand{\bs}{\backslash}
\newcommand{\I}{\mathbf{I}}
\newcommand{\0}{\mathbf{0}}
\newcommand{\E}{\mathcal{E}}
\newcommand{\F}{\mathcal{F}}
\newcommand{\diag}{\text{diag}}
\newcommand{\tsup}[1]{\textsuperscript{#1}}
\newcommand{\mb}[1]{\mathbf{#1}}
\newcommand{\bm}[1]{\begin{bmatrix}#1\end{bmatrix}}
\title{\LARGE \bf
Decentralized and Compositional Interconnection Topology Synthesis for Linear Networked Systems
}
\author{Shirantha Welikala, Hai Lin and Panos J. Antsaklis 
\thanks{The support of the National Science Foundation (Grant No. CNS-1830335, IIS-2007949) is gratefully acknowledged.}
\thanks{The authors are with the Department of Electrical Engineering, College of Engineering, University of Notre Dame, IN 46556, \texttt{{\small \{wwelikal,hlin1,pantsakl\}@nd.edu}}.}}
\begin{document}

\maketitle

\pagenumbering{arabic}
\thispagestyle{plain}
\pagestyle{plain}

\begin{abstract}
In this paper, we consider networked systems comprised of interconnected sets of linear subsystems and propose a decentralized and compositional approach to stabilize or dissipativate such linear networked systems via optimally modifying some existing interconnections and/or creating entirely new interconnections. We also extend this interconnection topology synthesis approach to ensure the ability to stabilize or dissipativate such linear networked systems under distributed (local) feedback control. To the best of the authors' knowledge, this is the first work that attempts to address the optimal interconnection topology synthesis problem for linear networked systems. 
The proposed approach in this paper only involves solving a sequence of linear matrix inequality problems (one at each subsystem). Thus, using standard convex optimization toolboxes, it can be implemented efficiently and scalably in a decentralized and compositional manner.
Apart from many generic linear networked systems applications (e.g., power grid control), a unique application for the proposed interconnection topology synthesis approach is in generating random stable (or dissipative, stabilizable, dissipativate-able) linear networked systems for simulation purposes.  
We also include an interesting case study where the proposed interconnection topology synthesis approach is compared with an alternative approach that only uses dissipativity information of the involved subsystems.  
\end{abstract}

\section{Introduction}\label{Sec:Introduction}

In recent years, attention towards analysis, controller synthesis, topology synthesis as well as optimization of large-scale networked systems (comprised of dynamically coupled subsystems)  has been renewed due to their various emerging applications (e.g., in critical infrastructure networks like supply chains \cite{Ivanov2018}, power grids \cite{Tang2021}, etc.) and confronting unique challenges (e.g., resilience \cite{Tordecilla2021}, security \cite{Samad2017}, etc.).

For such networked systems, a large number of distributed control solutions have been proposed in the literature that can not only stabilize but also optimize some performance metrics of interest \cite{Antonelli2013} during their operation. However, almost all such distributed control solutions are synthesized by a centralized design process which raises concerns related to their security, scalability, and compositionality \cite{WelikalaP32022}.

Over the years, there have been several attempts to address this decentralized controller synthesis problem exploiting weak couplings \cite{Michel1983}, hierarchical techniques \cite{Ishizaki2021}, and decomposition techniques \cite{Agarwal2021}. In particular, the work in \cite{Agarwal2021} proposes a natural and efficient decomposition technique for analysis and synthesis of distributed controllers inspired by the Sylvester's criterion \cite{Antsaklis2006}. Motivated by the attractive qualities of this Sylvester's criterion based decomposition approach \cite{Agarwal2021}, our recent work in \cite{WelikalaP32022} (and its extension \cite{Welikala2022Ax2}) generalized it so that many fundamental linear control solutions (e.g., dissipativity analysis, linear observer design, etc.) can be implemented in a decentralized as well as compositional manner over large-scale linear networked systems.

Nevertheless, a major challenge faced by this approach (as well as many other control solutions proposed for large-scale networked systems) is the incompatibility between the considered networked system and the proposed solution. Such an incompatibility may be due to the inherent weaknesses in the networked system and/or in the proposed solution. For example, a networked system may not yield a conclusive (and desired) result under a particular analysis technique. Similarly, a networked system may not be capable of yielding desired properties under a particular class of controllers. 

To address this incompatibility issue, we can either change the networked system to match the proposed solution (e.g., see \cite{WelikalaP52022}), or improve/specialize the proposed solution so as to handle the considered networked system (e.g., see \cite{WelikalaP32022}). While in many scenarios it is natural and practical (and even advisable) to take the latter approach, in some instances, the prior approach is also a valid and sensible option to take. Most importantly, developing techniques to systematically change the networked systems can lead to insightful findings. For example, assuming the proposed control solution sufficiently rich, we might be able to answer questions like: What kinds of network topologies are more robust to the disturbances? What are the most critical interconnections in the networked system? What is the most cost efficient network topology?

In this paper, we set out to solve the said incompatibility issue faced by the Sylvester's criterion based decentralized and compositional approach proposed in \cite{WelikalaP32022} (intended for analysis and distributed controller synthesis of large-scale linear networked systems). To this end, we propose to change the networked system so that it matches the approach proposed in \cite{WelikalaP32022}. In particular, in the considered networked system, we treat some inter-subsystem interconnections (if not all) as design variables and explore the possibility to: (1) change those variable interconnections from their nominal values, (2) create entirely new interconnections, and/or (3) remove existing interconnections, such that the proposed approach in \cite{WelikalaP32022} can yield conclusive as well as desired results. In essence, this can be seen as an effort to synthesize the interconnection topology for linear networked systems.

In fact, there have been only very few attempts on designing interconnection topologies for networked systems. For example, the work in \cite{Rafiee2010} considers designing a network topology to make the communications optimally efficient for a continuous-time average consensus protocol. The proposed solution in \cite{Rafiee2010} takes the form of a mixed integer semidefinite program - which does not scale well. The interconnection matrix synthesis problem is considered limited to linear and positive networked systems in \cite{Ebihara2017}. Several other interconnection matrix synthesis techniques such as the ones proposed in \cite{Rufino2018,Xue2013} and \cite{Cremean2003} have been reviewed in our recent work \cite{WelikalaP52022} (see also its extension \cite{Welikala2022Ax3}). 

In particular, the work in \cite{WelikalaP52022} proposes an interconnection matrix synthesis technique for non-linear networked systems using only the subsystem dissipativity properties (i.e., without using the complete knowledge of the non-linear subsystem dynamics). However, in this paper, we limit to linear networked systems and use the complete knowledge of the linear subsystem dynamics for interconnection topology synthesis. Nevertheless, as we will show in this paper (particularly in our case study), there is a clear advantage due to the use of such additional information regarding the networked system as compared to \cite{WelikalaP52022}.

\subsubsection{\textbf{Contributions}}
Our contributions can be summarized as follows: 
(1) We take a control theoretic approach to formulate several interconnection topology synthesis problems arising related to linear networked systems as LMI problems;
(2) Since the proposed interconnection topology synthesis approach is inspired by \cite{WelikalaP32022}, it is inherently decentralized and compositional;
(3) Moreover, it can be used in scenarios where the analysis and controller synthesis approaches proposed in \cite{WelikalaP32022} returns inconclusive;
(4) We also provide candidate local objective functions that can be used to penalize deviations from a nominal set of specifications (topology).
(5) The proposed interconnection topology synthesis approach can be used to generate random linear networked systems with certain qualities (e.g., stabilizability) - which is helpful when designing, testing, and validating control strategies developed for networked systems. 
(6) Similar to \cite{WelikalaP32022}, the proposed approach can be extended to address a wide range of problems arising related to linear networked systems based on fundamental linear systems theory (e.g., optimal topology synthesis to ensure observability). 
(7) We provide candidate local objective functions that can be used to penalize deviations from a nominal interconnection topology;
(8) We provide a detailed case study comparing the interconnection topology synthesis approaches proposed in this paper and in our recent work \cite{WelikalaP52022};

\subsubsection{\textbf{Organization}} 
This paper is organized as follows. Section \ref{Sec:ProblemFormulation} presents the details of the considered class of networked systems and motivates the need for interconnection topology synthesis. Section \ref{Sec:Preliminaries} summarizes several important preliminary concepts. Our main theoretical results that address several different interconnection topology synthesis problems of interest are presented in Sec. \ref{Sec:MainResults} along with several important remarks. A case study with fundamental details, numerical results, discussions, and comparisons are provided in Sec. \ref{Sec:CaseStudy} before concluding the paper in Sec. \ref{Sec:Conclusion}.

\subsubsection{\textbf{Notation}}
The sets of real and natural numbers are denoted by $\R$ and $\N$, respectively. We define $\N_N\triangleq\{1,2,\ldots,N\}$ where $N\in\N$. 
An $n\times m$ block matrix $A$ can be represented as $A=[A_{ij}]_{i\in\N_n, j\in\N_m}$ where $A_{ij}$ is the $(i,j)$\tsup{th} block of $A$ (for indexing purposes, either subscripts or superscripts may be used, i.e., $A_{ij} \equiv A^{ij}$). 
If $\Psi\triangleq[\Psi^{kl}]_{k,l \in \N_m}$ where $\Psi^{kl}\triangleq[\Psi^{kl}_{ij}]_{i,j\in\N_n}$, its block element-wise form \cite{WelikalaP32022} is denoted as $\mbox{BEW}(\Psi) \triangleq [[\Psi^{kl}_{ij}]_{k,l\in\N_m}]_{i,j\in\N_n}$. 
The transpose of a matrix $A$ is denoted by $A^\T$ and $(A^\T)^{-1} = A^{-\T}$. 
The zero and identity matrices are denoted by $\0$ and $\I$, respectively (dimensions will be clear from the context). A symmetric positive definite (semi-definite) matrix $A\in\R^{n\times n}$ is represented as $A=A^\T>0$ ($A=A^\T \geq 0$). Unless stated otherwise, we assume $A>0 \iff A=A^\T>0$ (i.e., symmetry is implied by the positive definiteness). 
$\mb{1}_{\{\cdot\}}$ is the indicator function and $e_{ij} \triangleq \I \cdot \mb{1}_{\{i=j\}}$.

\section{Problem Formulation}\label{Sec:ProblemFormulation}

\subsection{The Networked System}
We consider a networked dynamical system $\mathcal{G}_N$ comprised of $N$ interconnected subsystems denoted by $\{\Sigma_i:i\in\N_N\}$. The dynamics of the $i$\tsup{th} subsystem $\Sigma_i, i\in\N_N$ are given by 
\begin{equation}\label{Eq:SubsystemDynamics}
\begin{aligned}
    \dot{x}_i(t) =& \sum_{j\in\bar{\E}_i}A_{ij}x_j(t) + \sum_{j\in\bar{\E}_i}B_{ij}u_j(t)+\sum_{j\in\bar{\E}_i}E_{ij}w_{j}(t),\\ 
    y_i(t) =& \sum_{j\in\bar{\E}_i}C_{ij}x_j(t) + \sum_{j\in\bar{\E}_i}D_{ij}u_j(t) + \sum_{j\in\bar{\E}_i}F_{ij}w_j(t),
\end{aligned}
\end{equation}
where $x_i(t) \in \R^{n_i},\ u_i(t)\in\R^{p_i},\ w_i(t)\in\R^{q_i}$ and $y_i(t)\in\R^{m_i}$ respectively represents the state, input, disturbance and output specific to the subsystems $\Sigma_i$ at time $t\in\R_{\geq 0}$. 
In \eqref{Eq:SubsystemDynamics}, $\bar{\E}_i \triangleq \E_i\cup\{i\}$ where $\E_i \subset \N_N$ is the set of ``in-neighbors'' of the subsystem $\Sigma_i$. Formally, any subsystem $\Sigma_j$ is an ``in-neighbor'' of the subsystem $\Sigma_i$ (i.e., $j\in\E_i$) iff the matrices $A_{ij},B_{ij},C_{ij},D_{ij},E_{ij},F_{ij}$ in \eqref{Eq:SubsystemDynamics} are not all zero matrices. Conversely, $\bar{\F}_i \triangleq \F_i\cup\{i\}$ where $\F_i \triangleq \{j: j\in\N_N, \E_j \ni i \}$ is the set of ``out-neighbors'' of the subsystem $\Sigma_i$. An example networked system can be seen in Fig. \ref{Fig:LinearNetworkedSystem}. 

\begin{figure}[!b]
    \centering
    \includegraphics[width=1.75in]{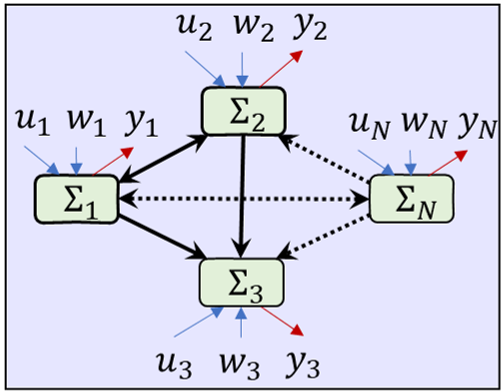}
    \caption{An example networked dynamical system $\mathcal{G}_N$.}
    \label{Fig:LinearNetworkedSystem}
\end{figure}

By writing \eqref{Eq:SubsystemDynamics} for all $i\in\N_N$ and concatenating suitably, we can get the dynamics of the networked system $\mathcal{G}_N$ as 
\begin{equation}\label{Eq:NetworkedSystemDynamics}
\begin{aligned}
\dot{x}(t) =&\ Ax(t) + Bu(t) + Ew(t),\\
y(t) =&\ Cx(t) + Du(t) + Fw(t),
\end{aligned}
\end{equation}
where $A=[A_{ij}]_{i,j\in\N_N}$, $B=[B_{ij}]_{i,j\in\N_N}$, $E=[E_{ij}]_{i,j\in\N_N}$, $C=[C_{ij}]_{i,j\in\N_N}$, $D=[D_{ij}]_{i,j\in\N_N}$ and $F=[F_{ij}]_{i,j\in\N_N}$ are all $N \times N$ block matrices, and $x(t)\in \R^n,\ u(t)\in\R^p, w(t)\in\R^q$ and $y(t)\in\R^m$ (with $n = \sum_{i\in\N_N} n_i$, $p = \sum_{i\in\N_N}p_i$, $q = \sum_{i\in\N_N} q_i$ and $m = \sum_{i\in\N_N} m_i$) are all $N\times 1$ block matrices respectively representing the networked system's state, input, disturbance and output at time $t\in\R_{\geq 0}$.  

\subsection{Distributed Controllers}
To enforce desired properties (e.g., stability) upon the networked system, a subsystem $\Sigma_i, i\in\N_N$ can use a distributed state feedback controller:
\begin{equation}\label{Eq:LocalStateFeedbackControl}
    u_i(t) = \sum_{j\in\bar{\E}_i} K_{ij}x_j(t).
\end{equation}

By writing \eqref{Eq:LocalStateFeedbackControl} for all $i\in\N_N$ and concatenating appropriately, we get the global form of the distributed feedback controller as 
\begin{equation}
    \label{Eq:GlobalControl}
    u(t) = Kx(t),   
\end{equation}
where $K=[K_{ij}]_{i,j\in\N_N}$. 

Note that the unspecified blocks in various block matrices in both \eqref{Eq:NetworkedSystemDynamics} and \eqref{Eq:GlobalControl} are zeros matrices (e.g., $A_{ij}=0, \forall j\not\in\E_i$).  

\subsection{Interconnection Topology Synthesis}

Even though state feedback control is a reasonable approach to enforce desired properties (e.g., stability) upon the networked system, it may be not useful in two scenarios: (1) when the networked system inherently involves no control inputs (i.e., when $B=D=\0$ in \eqref{Eq:NetworkedSystemDynamics}), or (2) when the networked system is inherently incapable of achieving the desired properties under state feedback control (e.g., if \eqref{Eq:NetworkedSystemDynamics} is not stabilizable when the desired property is stability). To address these inherent weaknesses of the networked system, in this paper, we propose to optimally adjust the interconnection parameters of the networked system (mainly $A_{ij}$ blocks with $i \neq j$ in \eqref{Eq:SubsystemDynamics}). Hence this approach can be seen as an attempt to synthesize the interconnection topology of the networked system.

Note also that, for the purposes of analysis and controller synthesis of the networked system \eqref{Eq:NetworkedSystemDynamics}, we can use the decentralized and compositional technique proposed in \cite{WelikalaP32022}. However, due to the used assumptions in \cite{WelikalaP32022}, this decentralized and compositional technique can return inconclusive (when analyzing) or infeasible (when synthesizing controllers) \cite{Welikala2022Ax2}. Nevertheless, as we will show in the sequel, this technical weakness can also be addressed by optimally adjusting the interconnection parameters of the networked system.

\section{Preliminaries} \label{Sec:Preliminaries}

\subsection{Stability and Dissipativity}

Since our main goal is to synthesize the interconnection topology of the linear networked system \eqref{Eq:NetworkedSystemDynamics} so as to enforce properties like stability or dissipativity (both without or with distributed feedback control \eqref{Eq:LocalStateFeedbackControl}), we next briefly introduce some relevant stability and dissipativity results.

Consider the linear time-invariant (LTI) system  
\begin{equation}\label{Eq:LTISystem}
\begin{aligned}
    \dot{x}(t) = Ax(t) + Bu(t),\\
    y(t) = Cx(t) + Du(t),
\end{aligned}
\end{equation}
where $x(t)\in\R^n,u(t)\in\R^p$, and $y(t)\in\R^m$ respectively represent the state, control input, and output at time $t\in\R_{\geq0}$.

\paragraph{\textbf{Stability}} A well-known necessary and sufficient condition for the stability of \eqref{Eq:LTISystem} is given in the following lemma as a linear matrix inequality (LMI).

\begin{lemma}\cite{Antsaklis2006} \label{Lm:Stability}
The dynamical system \eqref{Eq:LTISystem} (under $u(t)=\0$) is globally uniformly (exponentially) stable iff $\exists P > 0$ such that 
\begin{equation}\label{Eq:Lm:Stability}
    -A^\T P - PA \geq 0\ \ \ \ \ \ \mbox{($-A^\T P - PA > 0$)}.
\end{equation}
\end{lemma}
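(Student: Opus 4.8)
The plan is to use the quadratic candidate $V(x) \triangleq x^\T P x$ as a Lyapunov function and to split the equivalence into sufficiency and necessity, handling the non-strict and strict versions of \eqref{Eq:Lm:Stability} in parallel. For sufficiency, suppose $P>0$ satisfies \eqref{Eq:Lm:Stability}. Along any trajectory of \eqref{Eq:LTISystem} with $u\equiv\0$, $\dot V(x(t)) = x(t)^\T(A^\T P + PA)x(t) = -x(t)^\T(-A^\T P - PA)x(t)$. In the non-strict case this is $\le 0$, so $V$ is non-increasing, and the sandwich $\lambda_{\min}(P)\|x\|^2 \le V(x) \le \lambda_{\max}(P)\|x\|^2$ yields $\|x(t)\|^2 \le \tfrac{\lambda_{\max}(P)}{\lambda_{\min}(P)}\|x(0)\|^2$ for all $t\ge 0$, i.e. global uniform stability (uniformity being automatic since \eqref{Eq:LTISystem} is time-invariant). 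In the strict case, with $Q \triangleq -A^\T P - PA > 0$ we get $\dot V \le -\lambda_{\min}(Q)\|x\|^2 \le -\tfrac{\lambda_{\min}(Q)}{\lambda_{\max}(P)}V$, and the comparison/Gr\"onwall argument gives $V(x(t)) \le e^{-\alpha t}V(x(0))$ with $\alpha \triangleq \lambda_{\min}(Q)/\lambda_{\max}(P)>0$, hence $\|x(t)\| \le \sqrt{\lambda_{\max}(P)/\lambda_{\min}(P)}\,e^{-\alpha t/2}\|x(0)\|$, which is global exponential stability.

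For necessity in the strict case, exponential stability of \eqref{Eq:LTISystem} is equivalent to $A$ being Hurwitz, so $\|e^{At}\| \le c\,e^{-\beta t}$ for some $c,\beta>0$; then $P \triangleq \int_0^\infty e^{A^\T t}e^{At}\,dt$ converges, is symmetric, and is positive definite since $x^\T P x = \int_0^\infty \|e^{At}x\|^2\,dt > 0$ for $x\neq\0$ by invertibility of $e^{At}$. Moreover $A^\T P + PA = \int_0^\infty \tfrac{d}{dt}\!\left(e^{A^\T t}e^{At}\right)dt = \left[e^{A^\T t}e^{At}\right]_0^\infty = -\I$, so $-A^\T P - PA = \I > 0$, establishing the strict LMI. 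For necessity in the non-strict case, global uniform stability of \eqref{Eq:LTISystem} forces the eigenvalues of $A$ into the closed left half-plane with the imaginary-axis eigenvalues semisimple; decomposing $A$ (via a similarity transform) into a Hurwitz block and a diagonalizable purely-imaginary block, one applies the integral construction to the former and a fixed positive definite block to the latter, then patches these into a block-diagonal $P>0$ in the transformed coordinates rendering $-A^\T P - PA \ge 0$, and transforms back.

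The sufficiency direction and the integral construction under the strict inequality are routine. The main obstacle is the necessity direction in the non-strict case: it requires invoking the semisimplicity of imaginary-axis eigenvalues of a merely (Lyapunov-)stable LTI system and carefully assembling the block-structured certificate $P$, so that is the step I would develop in the most detail; alternatively, one can cite the standard reference for this classical fact and only sketch the block decomposition.
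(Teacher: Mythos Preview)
The paper does not prove this lemma at all; it is stated with a citation to \cite{Antsaklis2006} and treated as a standard background result, so there is no ``paper's own proof'' to compare against. Your argument is the classical textbook route (quadratic Lyapunov function for sufficiency, the integral $P=\int_0^\infty e^{A^\T t}e^{At}\,dt$ for necessity in the Hurwitz case, and a spectral block decomposition for the marginally stable case), and it is correct; you have also correctly flagged that the non-strict necessity direction is the only place requiring genuine care, namely the semisimplicity of imaginary-axis eigenvalues and the congruence transformation $P=T^{-\T}\hat P\,T^{-1}$ carrying the block-diagonal certificate back to the original coordinates.
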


Note that, henceforth, by `stability,' we simply refer to global exponential stability.

\paragraph{\textbf{$(Q,S,R)$-Dissipativity}} In general, dissipativity is an important property of dynamical systems that has many practical uses \cite{Willems1972a}. In this paper, we consider the quadratic dissipativity property called $(Q,S,R)$-dissipativity. 

\begin{definition} \cite{Kottenstette2014} \label{Def:QSRDissipativity}
The dynamical system \eqref{Eq:LTISystem} is $(Q,S,R)$-dissipative from $u(t)$ to $y(t)$, if there exists a positive definite function $V(x):\R^n\rightarrow\R_{\geq0}$ (storage function) such that for all $t_1 \geq t_0 \geq 0, x(t_0) \in \R^n$ and $u(t)\in\R^m$, the inequality
$
    V(x(t_1))-V(x(t_0)) \leq \int_{t_0}^{t_1} 
    \begin{bmatrix}
    y(t)\\u(t)
    \end{bmatrix}^\T
    \begin{bmatrix}
    Q & S \\ S^\T & R
    \end{bmatrix}
    \begin{bmatrix}
    y(t)\\u(t)
    \end{bmatrix}dt
$
holds for the given $Q\in\R^{m \times m}, S\in \R^{m \times p}$ and $R\in\R^{p\times p}$.
\end{definition}

Through appropriate choices of $Q, S$ and $R$ matrices, $(Q,S,R)$-dissipativity can capture several dynamical properties of interest, as summarized in the following remark.

\begin{remark} \cite{Kottenstette2014} \label{Rm:QSRDissipativityVariants}
The dynamical system \eqref{Eq:LTISystem} satisfying Def. \ref{Def:QSRDissipativity}:
(i) is \emph{passive} iff $Q=0, S=\frac{1}{2}\I, R=0$;
(ii) is \emph{strictly passive} iff $Q=-\rho \I, S=\frac{1}{2}\I, R=-\nu \I$ where $\rho,\nu>0$ ($\nu$, $\rho$ are passivity indices \cite{Welikala2022Ax2});
(iii) is \emph{$\mathcal{L}_2$-stable} iff $Q=-\I, S=0, R=-\gamma^2\I$ where $\gamma \geq 0$ ($\gamma$ is an \emph{$\mathcal{L}_2$-gain} of the system);
(iv) is \emph{sector bounded} iff $Q=-\I, S=(a+b)\I, R=-ab\I$ where $a,b\in\R$ ($a,b$ are sector bound parameters).
\end{remark}

A necessary and sufficient condition for $(Q,S,R)$-dissipativity of \eqref{Eq:LTISystem} is given in the next lemma as an LMI.
 
\begin{lemma}\cite{Welikala2022Ax2}
\label{Lm:QSRDissipativity}
The dynamical system \eqref{Eq:LTISystem} is $(Q,S,R)$-dissipative ($-Q>0,R=R^\T$) from $u(t)$ to $y(t)$ iff $\exists P>0$ such that 
\begin{equation}\label{Eq:Lm:QSRDissipativity}
    \begin{bmatrix}
    -A^\T P - P A     &  -PB + C^\T S            & C^\T \\
    -B^\T P + S^\T C          & D^\T S + S^\T D + R    & D^\T \\ 
    C & D & -Q^{-1}
    \end{bmatrix} \geq 0.
\end{equation}
\end{lemma}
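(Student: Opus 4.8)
The plan is to prove Lemma \ref{Lm:QSRDissipativity} by the standard storage-function argument specialized to linear systems and quadratic storage functions, exploiting the fact that for LTI systems a quadratic storage function is without loss of generality. First I would establish the sufficiency direction: assume $\exists P>0$ satisfying \eqref{Eq:Lm:QSRDissipativity}, pick the candidate storage function $V(x) = x^\T P x$, and show the dissipation inequality in differential (infinitesimal) form, namely $\dot{V}(x(t)) \leq \bm{y \\ u}^\T \bm{Q & S \\ S^\T & R} \bm{y \\ u}$, then integrate from $t_0$ to $t_1$ to recover Def. \ref{Def:QSRDissipativity}. Computing $\dot V = x^\T(A^\T P + PA)x + 2x^\T P B u$ and substituting $y = Cx + Du$, the required inequality becomes a quadratic form in $(x,u)$ being nonpositive, i.e. $\bm{x \\ u}^\T M \bm{x \\ u} \leq 0$ for an appropriate symmetric matrix $M$ built from $A,B,C,D,P,Q,S,R$.

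The key algebraic step is to recognize that the $3\times 3$ block LMI in \eqref{Eq:Lm:QSRDissipativity} is exactly a Schur complement reformulation of that $2\times 2$ block condition $-M \geq 0$. Concretely, the $(3,3)$ block is $-Q^{-1}$, which is positive definite since $-Q>0$; taking the Schur complement of the LMI with respect to this $(3,3)$ block eliminates the terms $C^\T$, $D^\T$, $C$, $D$ in the last row/column and adds $\bm{C^\T \\ D^\T} (-Q^{-1})^{-1} \bm{C & D} = \bm{C^\T \\ D^\T}(-Q)\bm{C & D}$ back into the top-left $2\times 2$ block. After this reduction, the $(1,1)$ block reads $-A^\T P - PA - C^\T Q C$, the $(1,2)$ block reads $-PB + C^\T S - C^\T Q D$, and the $(2,2)$ block reads $D^\T S + S^\T D + R - D^\T Q D$; one then checks that this is precisely $-M$, where $M$ is the matrix from the differential dissipation inequality after collecting terms via $y = Cx+Du$. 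So \eqref{Eq:Lm:QSRDissipativity} $\iff -M \geq 0 \iff$ the differential inequality holds for all $(x,u)$.

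For the necessity direction I would argue that if \eqref{Eq:LTISystem} is $(Q,S,R)$-dissipative, then since the system is linear and the supply rate is quadratic, there exists a quadratic storage function; the standard route is to show the available storage $V_a(x) = \sup_{u,\,T\geq 0} \left(-\int_0^T \bm{y\\u}^\T \bm{Q&S\\S^\T&R}\bm{y\\u}\,dt\right)$ is finite, quadratic, and positive definite, or alternatively to invoke the KYP-lemma-type equivalence directly. Writing $V_a(x) = x^\T P x$ with $P = P^\T > 0$, the dissipation inequality forces the differential form $\dot V \leq \bm{y\\u}^\T\bm{Q&S\\S^\T&R}\bm{y\\u}$ pointwise (by taking $t_1 \to t_0$), which is $-M \geq 0$, and then the Schur complement argument run in reverse yields \eqref{Eq:Lm:QSRDissipativity}. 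Since this lemma is cited to \cite{Welikala2022Ax2}, it is also entirely legitimate to state that the proof follows from the KYP lemma / a routine specialization, and simply present the Schur-complement computation that connects \eqref{Eq:Lm:QSRDissipativity} to the dissipation inequality.

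The main obstacle is the necessity direction: justifying rigorously that dissipativity of a (possibly non-minimal) linear system implies the existence of a \emph{positive definite} quadratic storage function $x^\T P x$ with $P>0$ (rather than merely positive semidefinite) requires either a reachability/observability hypothesis or a careful use of the available-storage and required-supply functions; in the paper's setting this is presumably handled by the hypotheses $-Q>0$ and the conventions adopted, but it is the step where the clean LMI-to-inequality equivalence of the sufficiency direction does not immediately run backwards. The bulk of the remaining work — expanding $\dot V$, substituting $y=Cx+Du$, and matching coefficients with the Schur complement — is routine matrix algebra that I would organize as: (1) derive $-M\geq 0$ as the differential dissipation inequality; (2) verify $-Q^{-1}>0$; (3) apply the Schur complement lemma to pass between $-M\geq0$ and \eqref{Eq:Lm:QSRDissipativity}; (4) integrate to connect the differential form with Def. \ref{Def:QSRDissipativity}.
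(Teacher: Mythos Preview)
The paper does not actually supply a proof of this lemma; it is stated as a cited result from \cite{Welikala2022Ax2}, so there is no ``paper's own proof'' to compare against. Your approach --- quadratic storage $V(x)=x^\T P x$, reducing the dissipation inequality to a $2\times 2$ block condition $-M\geq 0$ in $(x,u)$, and then recognizing \eqref{Eq:Lm:QSRDissipativity} as its Schur complement with respect to the positive-definite $(3,3)$ block $-Q^{-1}$ --- is exactly the standard derivation and is correct in structure.

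One small algebraic slip to fix when you write it out: the Schur complement \emph{subtracts} $\bm{C^\T\\D^\T}(-Q^{-1})^{-1}\bm{C & D}$ from the top-left $2\times 2$ block, and since $(-Q^{-1})^{-1}=-Q$, this amounts to \emph{adding} $\bm{C^\T\\D^\T}Q\bm{C & D}$; hence the reduced $(1,1)$ block is $-A^\T P - PA + C^\T Q C$ (not $-C^\T Q C$), and similarly $(2,2)$ is $D^\T S + S^\T D + R + D^\T Q D$. With these signs you will find the Schur complement matches $-M$ exactly. Your caveat about necessity --- that obtaining a strictly positive-definite quadratic storage in general requires minimality/reachability assumptions or a careful available-storage argument --- is well taken and is the only genuinely nontrivial step; in the cited reference this is handled via the KYP route, which you correctly flag as the appropriate tool.
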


Note that LMIs in \eqref{Eq:Lm:Stability} and \eqref{Eq:Lm:QSRDissipativity} are ``linear'' as they contain linear terms in the corresponding design variable $P$. As shown in \cite{Boyd1994}, LMIs can be solved efficiently and scalably using standard convex optimization algorithms.  

\subsection{Interconnection Topology Synthesis}
Due to the similarity between \eqref{Eq:LTISystem} and \eqref{Eq:NetworkedSystemDynamics}, LMIs \eqref{Eq:Lm:Stability} and \eqref{Eq:Lm:QSRDissipativity} can respectively be used for stability and dissipativity analysis of the networked system \eqref{Eq:NetworkedSystemDynamics}. Note, however, that, in the LMIs \eqref{Eq:Lm:Stability} and \eqref{Eq:Lm:QSRDissipativity}, we cannot treat the matrix $A$ (particularly its non-diagonal elements $A_{ij}$ with $i\neq j$) as an independent design variable separately from $P$. This is because it makes  \eqref{Eq:Lm:Stability} and \eqref{Eq:Lm:QSRDissipativity} bi-linear matrix inequalities - which are non-linear and significantly harder to solve compared to the corresponding LMIs. Therefore, synthesizing certain elements of $A$ (i.e., the interconnection parameters of the networked system) such that stability or dissipativity holds for the networked system \eqref{Eq:NetworkedSystemDynamics} is a non-trivial and challenging problem. Similarly, synthesizing certain interconnection parameters of the networked system such that stabilizability or dissipativate-ability holds for the networked system \eqref{Eq:NetworkedSystemDynamics} under state feedback control \eqref{Eq:LocalStateFeedbackControl} is also a non-trivial and challenging problem.

We address these challenges by taking a decentralized and compositional approach. In particular, to analyze or enforce (via state feedback control) desired properties like stability or dissipativity of the networked system \eqref{Eq:NetworkedSystemDynamics}, compared to solving large centralized LMIs like \eqref{Eq:Lm:Stability} and \eqref{Eq:Lm:QSRDissipativity}, we propose to solve their small decentralized and compositional versions proposed in \cite{WelikalaP32022}. This approach allows us to sequentially synthesize the interconnection parameters of the networked system \eqref{Eq:NetworkedSystemDynamics} (i.e., step-by-step). In particular, at each step, we add a new subsystem to the current network and solve a small LMI problem where some interconnection parameters related to the new subsystem are treated as design variables while all other interconnection parameters are treated as fixed.  

Before providing more details about this approach, we first need to outline the decentralized and compositional approach proposed in \cite{WelikalaP32022} that can be used to analyze/enforce centralized LMIs exploiting a concept named ``network matrices.''


\subsection{Network Matrices}
Consider a directed network $\mathcal{G}_n=(\mathcal{V},\mathcal{E})$ where $\mathcal{V} \triangleq \{\Sigma_i:i\in\N_n\}$ is the set of subsystems (nodes), $\mathcal{E} \subset \mathcal{V}\times \mathcal{V}$ is the set of inter-subsystem interconnections (edges) and $n\in\N$. We next recall a class of matrices named ``network matrices'' introduced in \cite{WelikalaP32022} corresponding to such a network $\mathcal{G}_n$.

\begin{definition}\cite{WelikalaP32022}\label{Def:NetworkMatrices}
	Given a network $\mathcal{G}_n=(\mathcal{V},\mathcal{E})$, any $n\times n$ block matrix $\Theta = \bm{\Theta_{ij}}_{i,j\in\N_n}$ is a corresponding \emph{network matrix} if: 
	(1) $\Theta_{ij}$ contains information specific only to the subsystems $\Sigma_i$ and $\Sigma_j$, and 
	(2) $(\Sigma_i,\Sigma_j) \not\in \mathcal{E}$ and $(\Sigma_j,\Sigma_i)\not\in \mathcal{E}$ implies $\Theta_{ij}=\Theta_{ji}=\0$, for all $i,j\in\N_n$.
\end{definition}


According to this definition, any $n \times n$ block matrix $\Theta=[\Theta_{ij}]_{i,j\in\N_n}$ is a network matrix of $\mathcal{G}_n$ if $\Theta_{ij}$ is a coupling weight matrix corresponding to the edge $(\Sigma_i,\Sigma_j)\in\mathcal{V}$. Moreover, any $n \times n$ block diagonal matrix $\Theta=\diag(\Theta_{ii}:i\in \N_n)$ where $\Theta_{ii}$ contains information specific only to the subsystem $\Sigma_i$, is a network matrix of any network with $n\in\N$ subsystems. The following lemmas provide several useful properties of such network matrices established in \cite{WelikalaP32022}.

\begin{lemma}
	\label{Lm:NetworkMatrixProperties}
	\cite{WelikalaP32022}
	Given a network $\mathcal{G}_n$, a few corresponding block network matrices $\Theta,\Phi,\{\Psi^{kl}:k,l\in\N_m\}$, and some arbitrary block-block matrix $\Psi\triangleq[\Psi^{kl}]_{k,l \in \N_m}$:
	\begin{enumerate}
		\item $\Theta^\T$, \ $\alpha \Theta + \beta \Phi$ are network matrices for any $\alpha,\beta \in \R$.
		\item $\Phi \Theta$, $\Theta\Phi$ are network matrices whenever $\Phi$ is a block diagonal network matrix.
		\item $\mbox{BEW}(\Psi)\triangleq [[\Psi^{kl}_{ij}]_{k,l\in\N_m}]_{i,j\in\N_n}$ is a network matrix.
	\end{enumerate}
\end{lemma}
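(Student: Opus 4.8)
The plan is to check, for each of the three claims, the two defining conditions of Definition \ref{Def:NetworkMatrices} directly: (i) the $(i,j)$th block of the resulting matrix must depend only on data of $\Sigma_i$ and $\Sigma_j$, and (ii) it (together with the $(j,i)$th block) must vanish whenever neither $(\Sigma_i,\Sigma_j)$ nor $(\Sigma_j,\Sigma_i)$ is an edge of $\mathcal{G}_n$. In every case the only real work is to record how block indexing behaves under the operation in question; condition (ii) then follows almost mechanically, and condition (i) follows by observing that the operation does not introduce data of any subsystem other than $\Sigma_i$ and $\Sigma_j$.

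For claim 1, I would use $(\Theta^\T)_{ij} = (\Theta_{ji})^\T$ and $(\alpha\Theta+\beta\Phi)_{ij}=\alpha\Theta_{ij}+\beta\Phi_{ij}$. Transposition does not introduce new subsystem data, so $(\Theta_{ji})^\T$ inherits (i) from $\Theta_{ji}$, and if both edges are absent then $\Theta_{ji}=\0$ (hence $(\Theta^\T)_{ij}=\0$, and symmetrically $(\Theta^\T)_{ji}=\0$). For the linear combination, both summands individually satisfy (i), and absence of both edges gives $\Theta_{ij}=\Phi_{ij}=\0$, so the combination is $\0$.

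For claim 2, the crucial observation is that a block-diagonal network matrix $\Phi$ has $\Phi_{kl}=\0$ for $k\neq l$, so the matrix product collapses to a single term: $(\Phi\Theta)_{ij}=\sum_{k\in\N_n}\Phi_{ik}\Theta_{kj}=\Phi_{ii}\Theta_{ij}$, and likewise $(\Theta\Phi)_{ij}=\Theta_{ij}\Phi_{jj}$. Here $\Phi_{ii}$ (resp. $\Phi_{jj}$) carries only $\Sigma_i$ (resp. $\Sigma_j$) data and $\Theta_{ij}$ carries only $\Sigma_i,\Sigma_j$ data, so (i) holds; and if both edges are absent, $\Theta_{ij}=\Theta_{ji}=\0$ kills both $(\Phi\Theta)_{ij}$ and $(\Phi\Theta)_{ji}$. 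I would also remark why block-diagonality cannot be relaxed: a general $\Phi$ would contribute terms $\Phi_{ik}\Theta_{kj}$ with $k\neq i,j$, injecting data of a third subsystem $\Sigma_k$ and breaking condition (i). For claim 3, I would just unpack the definition of $\mbox{BEW}$: the $(i,j)$th block of $\mbox{BEW}(\Psi)$ is the $m\times m$ block matrix $[\Psi^{kl}_{ij}]_{k,l\in\N_m}$, assembled from the $(i,j)$th blocks of the network matrices $\Psi^{kl}$; each such entry carries only $\Sigma_i,\Sigma_j$ data, so their assembly does too, and when neither edge is present every $\Psi^{kl}_{ij}=\0$, making the whole block $\0$ (symmetrically for the $(j,i)$ block).

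I do not expect a genuine obstacle. The only mild subtlety is that condition (i) is phrased semantically (``information specific only to $\Sigma_i$ and $\Sigma_j$'') rather than as a crisp algebraic constraint, so each bullet must explicitly argue that the relevant operation does not mix in data from other subsystems — and this is precisely the point at which the block-diagonal hypothesis on $\Phi$ becomes essential in claim 2.
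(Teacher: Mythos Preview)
Your proposal is correct: directly verifying the two conditions of Definition~\ref{Def:NetworkMatrices} blockwise, using $(\Theta^\T)_{ij}=(\Theta_{ji})^\T$, $(\alpha\Theta+\beta\Phi)_{ij}=\alpha\Theta_{ij}+\beta\Phi_{ij}$, the collapse $(\Phi\Theta)_{ij}=\Phi_{ii}\Theta_{ij}$ when $\Phi$ is block diagonal, and the componentwise assembly in $\mbox{BEW}(\Psi)$, is exactly the natural argument. Note, however, that the present paper does not actually supply a proof of this lemma---it is quoted from \cite{WelikalaP32022}---so there is nothing here to compare your route against; your write-up would serve as a self-contained proof where the paper simply cites one.
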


The above lemma enables claiming custom block matrices as ``network matrices'' by enforcing additional conditions. For example, if $A$ and $P$ are two block network matrices and $P$ is block diagonal, then: (1) $A^\T P, P A$ and $A^\T P + PA$ are all network matrices, and (2) if $\scriptsize \Psi\triangleq\bm{P & A^\T P\\ PA & P}$ is some block-block matrix, its \emph{block element-wise} (BEW) form $\mbox{BEW}(\Psi)\triangleq$ $\scriptsize \bm{\bm{P_{ii}e_{ij} & M_{ji}^\T P_{jj} \\ P_{ii}M_{ij} & P_{ii}e_{ij}}}_{i,j\in\N_N}$ is a network matrix. 

\begin{lemma}\label{Lm:AlternativeLMI_BEW}
	\cite{WelikalaP32022}
	Let $\Psi = [\Psi^{kl}]_{k,l\in\N_m}$ be an $m\times m$ block-block matrix where $\Psi^{kl}, \forall  k,l\in\N_m$ are $n \times n$ block matrices. Then, $\Psi > 0 \iff \text{BEW}(\Psi) \triangleq [[\Psi^{kl}_{ij}]_{k,l\in\N_m}]_{i,j\in\N_n}>0$.
\end{lemma}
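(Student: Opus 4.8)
The plan is to recognize that passing from $\Psi$ to $\text{BEW}(\Psi)$ merely reshuffles the scalar entries of the matrix by a \emph{symmetric} permutation of its rows and columns --- informally, it swaps the roles of the ``outer'' block index $k\in\N_m$ and the ``inner'' block index $i\in\N_n$ --- so that $\text{BEW}(\Psi)=T\Psi T^\T$ for a suitable permutation matrix $T$. Since a permutation matrix is orthogonal, this is simultaneously a congruence and a similarity, and positive definiteness is invariant under it in both directions; symmetry is invariant as well. Rather than constructing $T$ explicitly (which is pure bookkeeping), I would carry the argument out at the level of quadratic forms, which makes the term-by-term nature of the correspondence transparent and disposes of both implications at once.

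Concretely, the first step is to fix conformable partitions of test vectors. Any $v$ conformable with $\Psi=[\Psi^{kl}]_{k,l\in\N_m}$ decomposes as $v=[v^k]_{k\in\N_m}$ with $v^k=[v^k_i]_{i\in\N_n}$, and any $\tilde v$ conformable with $\text{BEW}(\Psi)=[[\Psi^{kl}_{ij}]_{k,l\in\N_m}]_{i,j\in\N_n}$ decomposes as $\tilde v=[\tilde v_i]_{i\in\N_n}$ with $\tilde v_i=[\tilde v_i^k]_{k\in\N_m}$; here $v^k_i$ and $\tilde v_i^k$ range over the same space (the one matching the $(k,i)$ diagonal sub-block of $\Psi$), so the assignment $\tilde v_i^k\triangleq v^k_i$ defines a linear isometric bijection $v\mapsto\tilde v$ between the two ambient spaces, which have equal dimension. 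The second step is the routine expansion
\[
v^\T\Psi v=\sum_{k,l\in\N_m}\ \sum_{i,j\in\N_n}(v^k_i)^\T\Psi^{kl}_{ij}v^l_j
=\sum_{i,j\in\N_n}\ \sum_{k,l\in\N_m}(\tilde v_i^k)^\T\Psi^{kl}_{ij}\tilde v_j^l
=\tilde v^\T\,\text{BEW}(\Psi)\,\tilde v,
\]
where only the order of summation over a finite index set is changed. Since $v=\0\iff\tilde v=\0$, this identity yields at once that $v^\T\Psi v>0$ for every $v\neq\0$ if and only if $\tilde v^\T\text{BEW}(\Psi)\tilde v>0$ for every $\tilde v\neq\0$.

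Finally, because in this paper $X>0$ means $X=X^\T>0$, it remains to note that symmetry transfers too: $\Psi=\Psi^\T$ is equivalent to $\Psi^{kl}_{ij}=(\Psi^{lk}_{ji})^\T$ for all $i,j\in\N_n$, $k,l\in\N_m$, and the same family of scalar identities is exactly what $\text{BEW}(\Psi)=\text{BEW}(\Psi)^\T$ asserts; combining this with the quadratic-form equivalence gives $\Psi>0\iff\text{BEW}(\Psi)>0$. The one place that demands care --- more bookkeeping than genuine difficulty --- is keeping the nested index $(k,i)$ (outer block, then inner block) distinct from $(i,k)$ throughout, so that the bijection $\tilde v_i^k=v^k_i$ is genuinely well-defined on matching dimensions and the displayed identity is exact term by term. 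Once the indices are pinned down, neither direction poses an obstacle and the argument is perfectly symmetric in $\Psi$ and $\text{BEW}(\Psi)$.
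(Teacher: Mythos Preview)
Your argument is correct. The paper does not actually prove this lemma: it is quoted from \cite{WelikalaP32022} as a preliminary result and no proof is given in the present paper, so there is nothing to compare against. Your quadratic-form argument is a clean and standard way to see the claim: the map $v\mapsto\tilde v$ with $\tilde v_i^k\triangleq v^k_i$ is the linear isometry induced by the permutation that exchanges the outer and inner block indices, and the displayed identity $v^\T\Psi v=\tilde v^\T\,\text{BEW}(\Psi)\,\tilde v$ is exactly the statement that $\text{BEW}(\Psi)=T\Psi T^\T$ for the corresponding permutation matrix $T$, read off at the level of forms. The symmetry check is a nice touch given the paper's convention that $>0$ implies symmetry.
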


Inspired by Sylvester's criterion \cite{Antsaklis2006}, the following lemma provides a decentralized and compositional testing criterion to evaluate the positive definiteness of an $N\times N$ block matrix $W=[W_{ij}]_{i,j\in\N_N}$ (for more details, see  \cite{Welikala2022Ax2}).  

\begin{lemma}\label{Lm:MainLemma}
	\cite{WelikalaP32022}
	A symmetric $N \times N$ block matrix $W = [W_{ij}]_{i,j\in\N_N} > 0$ if and only if
	\begin{equation}\label{Eq:Lm:MainLemma1}
		\tilde{W}_{ii} \triangleq W_{ii} - \tilde{W}_i \mathcal{D}_i \tilde{W}_i^\T > 0,\ \ \ \ \forall i\in\N_N,
	\end{equation}
	where
	\begin{equation}\label{Eq:Lm:MainLemma2}
		\begin{aligned}
			\tilde{W}_i \triangleq&\ [\tilde{W}_{ij}]_{j\in\N_{i-1}} \triangleq W_i(\mathcal{D}_i\mathcal{A}_i^\T)^{-1},\\
			W_i \triangleq&\  [W_{ij}]_{j\in\N_{i-1}}, \ \ \ 
			\mathcal{D}_i \triangleq \diag(\tilde{W}_{jj}^{-1}:j\in\N_{i-1}),\\
			\mathcal{A}_i \triangleq&\ 
			\bm{
				\tilde{W}_{11} & \0 & \cdots & \0 \\
				\tilde{W}_{21} & \tilde{W}_{22} & \cdots & \0\\
				\vdots & \vdots & \vdots & \vdots \\
				\tilde{W}_{i-1,1} & \tilde{W}_{i-1,2} & \cdots & \tilde{W}_{i-1,i-1}
			}.
		\end{aligned}
	\end{equation}
\end{lemma}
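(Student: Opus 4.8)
\textbf{Proof proposal for Lemma \ref{Lm:MainLemma}.}

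The plan is to prove the claim by induction on $N$, mirroring the structure of Sylvester's criterion but carried out blockwise via a Schur-complement argument. The key observation is that the quantities $\tilde{W}_{ii}$ defined in \eqref{Eq:Lm:MainLemma1}--\eqref{Eq:Lm:MainLemma2} are precisely the diagonal blocks appearing in a block-$LDL^\T$ (or block Cholesky) factorization of $W$. Concretely, I would first show that $\mathcal{A}_i$ is the leading $(i-1)\times(i-1)$ block-lower-triangular factor in such a factorization of the leading principal block submatrix $W^{(i-1)} \triangleq [W_{kl}]_{k,l\in\N_{i-1}}$, with $\mathcal{D}_i$ the corresponding (inverse) block-diagonal middle factor, so that $W^{(i-1)} = \mathcal{A}_i \mathcal{D}_i^{-1} \mathcal{A}_i^\T$; equivalently $(W^{(i-1)})^{-1} = \mathcal{A}_i^{-\T}\mathcal{D}_i\mathcal{A}_i^{-1}$. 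Granting this, the vector $\tilde{W}_i = W_i(\mathcal{D}_i\mathcal{A}_i^\T)^{-1}$ satisfies $\tilde{W}_i\mathcal{D}_i\tilde{W}_i^\T = W_i(\mathcal{D}_i\mathcal{A}_i^\T)^{-1}\mathcal{D}_i(\mathcal{A}_i\mathcal{D}_i^{-1}... )... $ which, after using $(W^{(i-1)})^{-1} = \mathcal{A}_i^{-\T}\mathcal{D}_i\mathcal{A}_i^{-1}$, collapses to $W_i (W^{(i-1)})^{-1} W_i^\T$. Hence $\tilde{W}_{ii} = W_{ii} - W_i (W^{(i-1)})^{-1} W_i^\T$ is exactly the Schur complement of $W^{(i-1)}$ in the leading $i\times i$ block $W^{(i)}$.

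With that identification in hand, the induction is routine. For the base case $i=1$ there is no correction term and $\tilde{W}_{11}=W_{11}$, so $\tilde{W}_{11}>0 \iff W^{(1)}>0$. For the inductive step, assume $\tilde{W}_{jj}>0$ for all $j\le i-1$ is equivalent to $W^{(i-1)}>0$. Then by the standard Schur complement lemma, given $W^{(i-1)}>0$, we have $W^{(i)}>0 \iff \tilde{W}_{ii} = W_{ii} - W_i (W^{(i-1)})^{-1} W_i^\T > 0$. Chaining these equivalences from $i=1$ up to $i=N$ yields $W=W^{(N)}>0 \iff \tilde{W}_{ii}>0$ for all $i\in\N_N$, which is the assertion. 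Along the way one also has to check that all the inverses invoked — $\tilde{W}_{jj}^{-1}$ in $\mathcal{D}_i$, $(\mathcal{D}_i\mathcal{A}_i^\T)^{-1}$, and $(W^{(i-1)})^{-1}$ — are well defined; this follows inductively because $\tilde{W}_{jj}>0$ (hence invertible) for $j<i$ is exactly the hypothesis carried forward, $\mathcal{A}_i$ is block lower triangular with these invertible diagonal blocks, and $W^{(i-1)} = \mathcal{A}_i\mathcal{D}_i^{-1}\mathcal{A}_i^\T$ is then invertible.

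The main obstacle I anticipate is not conceptual but bookkeeping: establishing cleanly the factorization identity $W^{(i-1)} = \mathcal{A}_i\mathcal{D}_i^{-1}\mathcal{A}_i^\T$ (or its inverse form) and then verifying that the somewhat opaque expression $\tilde{W}_i\mathcal{D}_i\tilde{W}_i^\T$ telescopes to $W_i(W^{(i-1)})^{-1}W_i^\T$. The cleanest route is probably to set up the recursion for $\mathcal{A}_{i+1}$ and $\mathcal{D}_{i+1}$ in terms of $\mathcal{A}_i$, $\mathcal{D}_i$, $\tilde{W}_i$, and $\tilde{W}_{ii}$ — i.e., show that appending the new block row $[\tilde{W}_{i1},\ldots,\tilde{W}_{i,i-1},\tilde{W}_{ii}]$ to $\mathcal{A}_i$ reproduces exactly the block-$LDL^\top$ update of $W^{(i)}$ — and to prove the factorization and the telescoping identity simultaneously inside the induction, rather than as separate lemmas. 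Everything else is then a direct appeal to the block Schur complement lemma.
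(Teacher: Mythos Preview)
Your approach is sound and matches what the paper itself indicates: the lemma is not proved here but cited from \cite{WelikalaP32022}, and the surrounding text explicitly says it is ``inspired by Sylvester's criterion.'' A blockwise $LDL^\top$ factorization combined with the Schur complement lemma and induction on $N$ is exactly the natural route, and your identification of $\tilde{W}_{ii}$ with the Schur complement $W_{ii}-W_i(W^{(i-1)})^{-1}W_i^\top$ is the crux.

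One bookkeeping slip to fix before you carry out the computation: the correct factorization is
\[
W^{(i-1)} \;=\; \mathcal{A}_i\,\mathcal{D}_i\,\mathcal{A}_i^\top,
\qquad
(W^{(i-1)})^{-1} \;=\; \mathcal{A}_i^{-\top}\,\mathcal{D}_i^{-1}\,\mathcal{A}_i^{-1},
\]
not $\mathcal{A}_i\mathcal{D}_i^{-1}\mathcal{A}_i^\top$ as you wrote. (Quick check at $i=2$: $\mathcal{A}_2=\tilde{W}_{11}$, $\mathcal{D}_2=\tilde{W}_{11}^{-1}$, so $\mathcal{A}_2\mathcal{D}_2\mathcal{A}_2^\top=\tilde{W}_{11}=W_{11}$, whereas your version gives $\tilde{W}_{11}^{3}$.) With this correction, and using that $\mathcal{D}_i$ is symmetric, you get
\[
\tilde{W}_i\mathcal{D}_i\tilde{W}_i^\top
= W_i\,\mathcal{A}_i^{-\top}\mathcal{D}_i^{-1}\,\mathcal{D}_i\,\mathcal{D}_i^{-1}\mathcal{A}_i^{-1}\,W_i^\top
= W_i\,(W^{(i-1)})^{-1}\,W_i^\top,
\]
which is precisely the telescoping you want. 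After that, the induction and the invertibility checks go through exactly as you outlined.
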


The above lemma shows that testing positive definiteness of an $N\times N$ block matrix $W=[W_{ij}]_{i,j\in\N_N}$ can be broken down to $N$ separate smaller sequence of tests (iterations). In a network setting where $W$ is a block network matrix corresponding to a network $\mathcal{G}_N$, at the $i$\tsup{th} iteration (i.e., at the subsystem $\Sigma_i$), we now only need to test whether $\tilde{W}_{ii}>0$, where $\tilde{W}_{ii}$ can be computed using: 
(1) $\{W_{ij}:j \in \N_i\}$ (related blocks to the subsystem $\Sigma_i$ extracted from $W$), 
(2) $\{\tilde{W}_{ij}:j\in \N_{i-1}\}$ (computed using \eqref{Eq:Lm:MainLemma2} at subsystem $\Sigma_i$), and 
(3) $\{\{\tilde{W}_{jk}:k\in\N_j\}:j\in\N_{i-1}\}$ (blocks computed using \eqref{Eq:Lm:MainLemma2} at previous iteratins/subsystems $\{\Sigma_j:j\in\N_{i-1}\}$). 
Note also that, using Schur complement theory, the matrix inequality $\tilde{W}_{ii}>0$ \eqref{Eq:Lm:MainLemma1} can be transformed into a form that is linear in $[W_{ij}]_{j\in\N_{i-1}}$.

Therefore it is clear that Lm. \ref{Lm:MainLemma} can be used to efficiently test/enforce the positive definiteness of a network matrix in a decentralized manner. In fact, as shown in \cite{Welikala2022Ax2}, for some network topologies, this process is fully distributed (i.e., no communications are required between non-neighboring subsystems). Moreover, the compositionally of this process (i.e., the resilience to subsystem removals/additions from/to the network) has also been established in \cite{WelikalaP32022}. This decentralized and compositional approach to test/enforce the positive-definiteness of a network matrix $W$ is outlined in Alg. \ref{Alg:DistributedPositiveDefiniteness}.

\section{Main Results}\label{Sec:MainResults}

\begin{algorithm}[!t]
	\caption{Testing/Enforcing $W>0$ in a Network Setting.}
	\label{Alg:DistributedPositiveDefiniteness}
	\begin{algorithmic}[1]
		\State \textbf{Input: } $W = [W_{ij}]_{i,j\in\N_N}$
		\State \textbf{At each subsystem $\Sigma_i, i \in \N_N$ execute:} 
		\Indent
		\If{$i=1$}
		\State Test/Enforce: $W_{11}>0$
		\State Store: $\tilde{W}_1 \triangleq [W_{11}]$ \Comment{To be sent to others.}
		\Else
		\State \textbf{From each subsystem $\Sigma_j, j\in\N_{i-1}$:}
		\Indent
		\State Receive: $\tilde{W}_j \triangleq  [\tilde{W}_{j1},\tilde{W}_{j2},\ldots,\tilde{W}_{jj}]$
		\State Receive: Required info. to compute $W_{ij}$
		\EndIndent
		\State \textbf{End receiving}
		\State Construct: $\mathcal{A}_i, \mathcal{D}_i$ and $W_i$.  
		\Comment{Using: \eqref{Eq:Lm:MainLemma2}.}
		\State Compute: $\tilde{W}_i \triangleq W_i (\mathcal{D}_i\mathcal{A}_i^\T)^{-1}$ 
		\Comment{From \eqref{Eq:Lm:MainLemma2}.}
		\State Compute: $\tilde{W}_{ii} \triangleq W_{ii} - \tilde{W}_{i}\mathcal{D}_i\tilde{W}_i^\T$ 
		\Comment{From \eqref{Eq:Lm:MainLemma1}.}
		\State Test/Enforce: $\tilde{W}_{ii} > 0$
		\State Store: $\tilde{W}_i \triangleq [\tilde{W}_i, \tilde{W}_{ii}]$
		\Comment{To be sent to others}
		\EndIf
		\EndIndent
		\State \textbf{End execution}
	\end{algorithmic}
\end{algorithm}

In this section, we present our main theoretical results on decentralized and compositional synthesis of interconnection topology in linear networked systems. Note that this synthesis process is designed to enforce stability or dissipativity both without or with the help of distributed state feedback control, i.e., we are interested in enforcing: 
(1) stability,
(2) stabilizability (under feedback control),
(3) dissipativity and 
(4) dissipativate-ability (under feedback control),
via synthesizing an appropriate interconnection topology. 

Based on the subsystem dynamics \eqref{Eq:SubsystemDynamics}, it is clear that the interconnection topology of the networked system \eqref{Eq:NetworkedSystemDynamics} is determined by the block structures of the block network matrices $A, B, E, C, D, F$ and $K$ in \eqref{Eq:NetworkedSystemDynamics} and \eqref{Eq:GlobalControl} (see also Def. \ref{Def:NetworkMatrices}). Regarding these block network matrices, we make the following two technical assumptions. 

\begin{assumption}\cite{WelikalaP32022}\label{As:BlockDiagonal}
    The block network matrices $C, D$, and $F$ (in \eqref{Eq:NetworkedSystemDynamics}) are block diagonal network matrices. 
\end{assumption}


\begin{assumption}\label{As:FixedBlocks}
Any block network matrix $M = [M_{ij}]_{i,j\in\N_N}$ in the set $\{A,B,E,K\}$ (from \eqref{Eq:NetworkedSystemDynamics} and \eqref{Eq:GlobalControl}) unless stated otherwise, satisfies the following statement: Corresponding to a subsystem $\Sigma_i, i\in\N_N$, the intrinsic matrix block $M_{ii}$ and the interconnection matrix blocks $\{M_{ij}:j\in\E_i\}$ and $\{M_{ji}:j\in\F_i\}$ are known and fixed while the remaining interconnection matrix blocks $\{M_{ij}:j\not\in\E_i\}$ and $\{M_{ji}:j\not\in \F_i\}$ are free to be designed. 
\end{assumption}

Note that the above assumption relaxes a hard constraint used in \cite{WelikalaP32022}. For example, in \cite{WelikalaP32022}, $A_{ij}=\0, \forall j\not\in\E_i$ and $A_{ji}=\0, \forall j\not\in\F_i$ was assumed. In contrast, here we allow new interconnections when necessary via treating, e.g., $\{A_{ij}:j\not\in\E_i\}$ and $\{A_{ji}:j\not\in \F_i\}$ as free to be designed. 

Note also that we execute this design/synthesis task in a decentralized and compositional manner. Therefore, in its $i$\tsup{th} step (executed at the subsystem $\Sigma_i, i\in\N_N$, according to Alg. \ref{Alg:DistributedPositiveDefiniteness}), we only need to synthesize a subset of interconnection matrix blocks, E.g., $\{A_{ij}:j\not\in\E_i\cap\N_{i-1}\}$ and $\{A_{ji}:j\not\in\F_i\cap\N_{i-1}\}$. 

\subsection{Enforcing Stability and Stabilizability}

\paragraph{\textbf{Stability}} The following theorem considers an un-actuated networked system and provides how new interconnections can be created via designing the interconnection matrix blocks $\{A_{ij}:j\not\in\E_i\}$ and $\{A_{ji}:j\not\in\F_i\}$.   
 
\begin{theorem} \label{Th:CTLTIStability}(Stability via Topology Synthesis)
The networked system \eqref{Eq:NetworkedSystemDynamics}, under Assumption \ref{As:FixedBlocks}, $u(t)=\0$ and $w(t)=\0$, is stable if at each subsystem $\Sigma_i,i\in\N_N$, the LMI problem  
\begin{equation}\label{Eq:Th:CTLTIStability}
\begin{aligned}
    \mathbb{P}_1: \text{Find}& \ \ P_{ii}, \{Q_{ij}:j\not\in\E_i\cap\N_{i-1}\}, \{A_{ji}:j\not\in\F_i\cap\N_{i-1}\}\\ 
    \text{such that}& \ \ P_{ii} > 0, \ \ \tilde{W}_{ii}>0,
\end{aligned}
\end{equation}
is feasible, where $\tilde{W}_{ii}$ is computed from Alg. \ref{Alg:DistributedPositiveDefiniteness} (Steps: 3-16) when analyzing $W = [W_{ij}]_{i,j\in\N_N}>0$ with  
\begin{equation}\label{Eq:Th:CTLTIStability2}
    W_{ij} = - P_{ii}A_{ij}\mb{1}_{\{j\in\bar{\E}_i\}} -A_{ji}^\T P_{jj} \mb{1}_{\{j\in\bar{\F}_i\}} - Q_{ij}\mb{1}_{\{j\not\in\E_i\}},
\end{equation}
and the new interconnections are  
$\{A_{ji}:j\not\in\F_i\cap\N_{i-1}\}$ and 
$\{A_{ij} \triangleq P_{ii}^{-1} Q_{ij}: j\not\in\E_i\cap \N_{i-1}\}$.
\end{theorem}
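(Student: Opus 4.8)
The plan is to reduce the claim to the centralized stability LMI of Lemma~\ref{Lm:Stability}, restrict the Lyapunov certificate to be block diagonal, and then invoke the decentralized positive-definiteness test of Lemma~\ref{Lm:MainLemma}. Concretely, by Lemma~\ref{Lm:Stability} it suffices to produce a $P>0$ with $-A^\T P-PA>0$; we look for $P$ of the block-diagonal form $P=\diag(P_{ii}:i\in\N_N)$ with each $P_{ii}>0$, which is why the theorem asserts only a sufficient condition. With $P$ block diagonal, a direct block computation shows that $W\triangleq-A^\T P-PA$ is symmetric with $(i,j)$-block $W_{ij}=-A_{ji}^\T P_{jj}-P_{ii}A_{ij}$, so the task becomes: pick $\{P_{ii}>0\}$ together with the interconnection blocks left free by Assumption~\ref{As:FixedBlocks} so that $W>0$.

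The central difficulty is that $P_{ii}A_{ij}$ is bilinear in the two unknowns $P_{ii}$ and $A_{ij}$, so $W>0$ is not jointly an LMI. I would remove this exactly as the statement suggests: for each free block $A_{ij}$ (those with $j\not\in\bar{\E}_i$) introduce a fresh matrix variable $Q_{ij}\triangleq P_{ii}A_{ij}$ and optimize over $Q_{ij}$ in place of $A_{ij}$; since $P_{ii}>0$ is invertible, the interconnection block is recovered uniquely as $A_{ij}=P_{ii}^{-1}Q_{ij}$, and the transposed blocks are handled by $W_{ji}=W_{ij}^\T$. After this change of variables every block $W_{ij}$ is affine in the decision variables $\{P_{ii}\}$, $\{Q_{ij}\}$ and the free blocks $\{A_{ij}\}$, matching the expression~\eqref{Eq:Th:CTLTIStability2}; moreover $W$ is a network matrix (Def.~\ref{Def:NetworkMatrices}) of the augmented interconnection graph, and $P$ is a block-diagonal network matrix.

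It then remains to decentralize the test $W>0$. Here I would apply Lemma~\ref{Lm:MainLemma}: $W>0$ if and only if $\tilde W_{ii}>0$ for all $i\in\N_N$, with $\tilde W_{ii}$ produced by the recursion~\eqref{Eq:Lm:MainLemma1}--\eqref{Eq:Lm:MainLemma2}, i.e. by Steps~3--16 of Alg.~\ref{Alg:DistributedPositiveDefiniteness}. The key bookkeeping point is that $\tilde W_{ii}$ depends only on the blocks $\{W_{ij}:j\in\N_i\}$ and on the quantities $\{\tilde W_{jk}\}$ computed at the earlier iterations $\Sigma_j$, $j<i$; and among the variables entering $\{W_{ij}:j\in\N_i\}$, the ones not already fixed by earlier iterations are precisely $P_{ii}$, $\{Q_{ij}:j\in\N_{i-1},\,j\not\in\E_i\}$ and $\{A_{ji}:j\in\N_{i-1},\,j\not\in\F_i\}$, which are the decision variables of $\mathbb{P}_1$. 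Although $\tilde W_{ii}=W_{ii}-\tilde W_i\mathcal{D}_i\tilde W_i^\T$ is quadratic in $W_i=[W_{ij}]_{j\in\N_{i-1}}$, the fact that $\mathcal{D}_i>0$ (its diagonal blocks are the $\tilde W_{jj}^{-1}>0$ enforced at previous steps) lets a Schur complement rewrite $\tilde W_{ii}>0$ as a matrix inequality linear in $W_i$, hence as an LMI in the $\mathbb{P}_1$ variables. Consequently, if $\mathbb{P}_1$ is feasible at every $\Sigma_i$, assembling $P=\diag(P_{ii})$ together with $A_{ij}=P_{ii}^{-1}Q_{ij}$ and the designed $A_{ji}$ gives $P>0$ and $\tilde W_{ii}>0$ for all $i$, whence $W=-A^\T P-PA>0$ by Lemma~\ref{Lm:MainLemma}, and stability follows from Lemma~\ref{Lm:Stability}.

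I expect the main obstacle to be exactly the bilinearity $P_{ii}A_{ij}$ and the realization that the substitution $Q_{ij}=P_{ii}A_{ij}$ loses nothing (because $P_{ii}$ remains invertible), after which the argument is a bookkeeping exercise in the already-established decentralized machinery. A secondary point I would check carefully is the compositional consistency of the sequential scheme: fixing the step-$i$ variables must not require revising any $\tilde W_{jj}$ with $j<i$. This holds because, from~\eqref{Eq:Th:CTLTIStability2}, each step-$i$ variable ($P_{ii}$, or a $Q_{ij}$ or $A_{ji}$ with $j<i$) occurs in $W$ only in blocks of the form $W_{ik}$ or $W_{ki}$, hence only inside $\tilde W_{ii}$ and inside blocks processed at later iterations, never inside the blocks that determine $\tilde W_{jj}$ for $j<i$.
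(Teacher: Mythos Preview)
Your proposal is correct and follows essentially the same approach as the paper: restrict to block-diagonal $P$, form $W=-A^\T P-PA$, linearize the bilinear terms via the change of variables $Q_{ij}\triangleq P_{ii}A_{ij}$, and then invoke Lemma~\ref{Lm:MainLemma}/Alg.~\ref{Alg:DistributedPositiveDefiniteness} to decentralize the test $W>0$. Your write-up is in fact more explicit than the paper's on two points the paper only alludes to---the Schur-complement step that makes $\tilde W_{ii}>0$ affine in $W_i$, and the consistency check that step-$i$ variables never affect $\tilde W_{jj}$ for $j<i$---but the underlying argument is identical.
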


\begin{proof}
Let us define $P \triangleq \diag(P_{ii}:i\in\N_N)$ and $W \triangleq -A^\T P - P A$ where now $A=[A_{ij}]_{i,j\in\N_N}$ includes variable interconnection blocks $\{A_{ij}:j\not\in\E_i\}$ and $\{A_{ji}:j\not\in\F_i\}$ in its every $i$\tsup{th} column and row, respectively, $i\in\N_N$ (replacing the fixed $\0$ blocks that were there as per the original definition of $A$ given in \eqref{Eq:NetworkedSystemDynamics}).

According to Lm. \ref{Lm:Stability}, this new networked system is stable if we can find $P>0$ such that $W>0$. Based on the above definition of $W=[W_{ij}]_{i,j\in\N_N}$, it is a symmetric network matrix (see Def. \ref{Def:NetworkMatrices}) where  
\begin{equation}\label{Eq:Th:CTLTIStabilityStep1}
    W_{ij} = - P_{ii}A_{ij} - A_{ji}^\T P_{jj}. 
\end{equation}
Thus, we can use Alg. \ref{Alg:DistributedPositiveDefiniteness} to test $W>0$ in a decentralized and compositional manner via testing $\tilde{W}_{ii}>0$ at each subsystem $\Sigma_i,i\in\N_N$ sequentially (see Lm. \ref{Lm:MainLemma} and \eqref{Eq:Lm:MainLemma1}). 

However, according to \eqref{Eq:Lm:MainLemma2}, testing $\tilde{W}_{ii}>0$ will be an LMI problem only if the terms $\{W_{ij}:j\in\N_{i-1}\}$ are linear in the program variables: $P_{ii}, \{A_{ij}:j\not\in\E_i\cap\N_{i-1}\}$ and $\{A_{ji}:j\not\in\F_i\cap\N_{i-1}\}$. Based on \eqref{Eq:Th:CTLTIStabilityStep1}, the term $-P_{ii}A_{ij}$ in $W_{ij}$ become bi-linear whenever $j\in\not\in\E_i\cap\N_{i-1}$. This calls for a change of variables:
\begin{equation}
    Q_{ij} \triangleq P_{ii}A_{ij},\ \ j\not\in \E_i\cap \N_{i-1},
\end{equation}
which transforms $W_{ij}$ in \eqref{Eq:Th:CTLTIStabilityStep1} into the form \eqref{Eq:Th:CTLTIStability2} - which is linear in terms of the new program variables: $P_{ii}, \{Q_{ij}:j\not\in\E_i\cap\N_{i-1}\}$ and $\{A_{ji}:j\not\in\F_i\cap\N_{i-1}\}$. Consequently, testing $\tilde{W}_{ii}>0$ takes the form of an LMI problem \eqref{Eq:Th:CTLTIStability}.

If all the local LMI problems \eqref{Eq:Th:CTLTIStability} are feasible, $\exists P_{ii}>0$ such that $\tilde{W}_{ii}>0, \forall i\in\N_N$ - which implies that $\exists P > 0$ such that $W>0$. This leads to the conclusion that the new networked system with new interconnections $\{A_{ji}:j\not\in\F_i\cap\N_{i-1}\}$ and 
$\{A_{ij} \triangleq P_{ii}^{-1} Q_{ij}: j\not\in\E_i\cap \N_{i-1}\}$ is stable. 
\end{proof}

\begin{remark}(Objective Functions)\label{Rm:LMIObjective}
    As the objective function of the decentralized LMI problem \eqref{Eq:Th:CTLTIStability} proposed in Th. \ref{Th:CTLTIStability}, we can use:
    \begin{equation}\label{Eq:LMIObjective}
        J = \sum_{j\not\in\E_i\cap\N_{i-1}} c_{ij} \Vert Q_{ij}-P_{ii}\bar{A}_{ij} \Vert + \beta_{ii}\sum_{j\not\in\F_i\cap\N_{i-1}} c_{ji} \Vert A_{ji} - \bar{A}_{ji}\Vert, 
    \end{equation}
    where (1) $\bar{A}_{ij}$ and $\bar{A}_{ji}$ matrices represent desired/prescribed candidates for $A_{ij}$ and $A_{ji}$, respectively,
    (2) $c_{ij}$ and $c_{ji}$ scalars represent the cost of the interconnections $(\Sigma_j,\Sigma_i)$ and $(\Sigma_i,\Sigma_j)$, respectively, and
    (3) $\beta_{ii}$ is a normalizing constant. It is easy to see that selecting $\beta_{ii} = \Vert P_{ii} \Vert$ equally weights the two components in the above objective function \eqref{Eq:LMIObjective}. However, such a choice is not practical as it makes the objective function non-convex. Therefore, a reasonable alternative would be to use $\beta_{ii} = \Vert P_{jj} \Vert$ - which is a known constant when evaluating \eqref{Eq:Th:CTLTIStability}. Note that an intuitive objective function of this form can also be used with the decentralized LMI problems proposed in the sequel in Theorems \ref{Th:StabilizationUnderFSF}-\ref{Th:DissipativationUnderFSF} (with a few minor modifications). 
\end{remark}


\begin{remark}(Refining Existing Interconnections)\label{Rm:Refining}
The proposed interconnection topology synthesis approach can also be used to refine existing interconnections. To show this, assume we are interested in refining the interconnection $A_{ij}$ for some $j\in\E_i$. First, we need to modify the set of in-neighbors of the subsystem $\Sigma_i$ such that $\E_i \rightarrow \E_i \bs \{j\}$. Next, the current value of $A_{ij}$ should be stored as the prescribed value $\bar{A}_{ij}$ in the LMI objective \eqref{Eq:LMIObjective} and then consider $A_{ij}$ as a design variable to be synthesized. Moreover, if we are interested in removing the interconnection $A_{ij}$ entirely, we need to set the cost coefficient $c_{ij}$ in the LMI objective \eqref{Eq:LMIObjective} arbitrarily high and set $\bar{A}_{ij}=\0$. Finally, via solving the LMI problem \eqref{Eq:Th:CTLTIStability} we can obtain the refined interconnection topology (with a refined $A_{ij}$ value).
\end{remark}

\paragraph{\textbf{Stabilizability}}
The next theoretical result is on enforcing stabilizability under distributed state feedback control via interconnection topology synthesis.

\begin{theorem}\label{Th:StabilizationUnderFSF}
(Stabilizability via Topology Synthesis) 
The networked system \eqref{Eq:NetworkedSystemDynamics} (where $B$ is block diagonal), under Assumption \ref{As:FixedBlocks}, $w(t)=\0$ and local state feedback control \eqref{Eq:LocalStateFeedbackControl}, is stable if at each subsystem $\Sigma_i,i\in\N_N$, the LMI problem    
\begin{equation}\label{Eq:Th:StabilizationUnderFSF}
\begin{aligned}
    \mathbb{P}_2: \text{Find}&\ \ M_{ii}, L_{ii}, \{L_{ij}:j\in \N_{i-1}\}, \{L_{ji}:j\in\N_{i-1}\}, \\
    &\ \ \{A_{ij}:j\not\in\E_i\cap\N_{i-1}\},\{Q_{ji}:j\not\in\F_i\cap\N_{i-1}\}\\ 
    \text{such that} &\ \ M_{ii} > 0, \ \ \tilde{W}_{ii}>0,
\end{aligned} 
\end{equation}
is feasible, where $\tilde{W}_{ii}$ is computed from Alg. \ref{Alg:DistributedPositiveDefiniteness} (Steps: 3-16) when enforcing $W = [W_{ij}]_{i,j\in\N_N}>0$ with 
\begin{equation}\label{Eq:Th:StabilizationUnderFSF2}
\begin{aligned}
    W_{ij} =&\ - A_{ij}M_{jj} - M_{ii}A_{ji}^\T\mb{1}_{\{j\in\bar{\F}_i\}} -Q_{ji}^\T \mb{1}_{j\not\in\F_i}\\
    &\ -L_{ji}^\T B_{jj}^\T - B_{ii}L_{ij}.
\end{aligned}
\end{equation}
The local state feedback controller gains (which includes the new feedback interconnections $\{K_{ij}:j\not\in \E_i\cap\N_{i-1}\}$ and $\{K_{ji}:j\not\in\F_i\cap\N_{i-1}\}$) are:
\begin{equation}\label{Eq:Th:StabilizationUnderFSF3}
    K_{ij} = L_{ij}M_{jj}^{-1} \ \ \mbox{ and } \ \ K_{ji} = L_{ji}M_{ii}^{-1}, \ \ \forall j\in\N_{i}
\end{equation} 
and the new system interconnections are: $\{A_{ij} : j\not\in\E_i\cap \N_{i-1}\}$ and 
$\{A_{ji}\triangleq Q_{ji}M_{ii}^{-1}:j\not\in\F_i\cap\N_{i-1}\}$.
\end{theorem}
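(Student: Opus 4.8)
The plan is to mirror the argument of Theorem~\ref{Th:CTLTIStability}, but now working with the closed-loop matrix $A+BK$ and using the standard congruence transformation that linearizes the Lyapunov inequality in the presence of a feedback gain. First I would substitute $u(t)=Kx(t)$ into \eqref{Eq:NetworkedSystemDynamics} (with $w(t)=\0$) to obtain $\dot x(t)=(A+BK)x(t)$, where $A=[A_{ij}]_{i,j\in\N_N}$ is now understood to carry the variable interconnection blocks $\{A_{ij}:j\not\in\E_i\}$ and $\{A_{ji}:j\not\in\F_i\}$ in place of the $\0$ blocks of \eqref{Eq:NetworkedSystemDynamics}, and $K=[K_{ij}]$ likewise carries new feedback interconnections. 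By Lemma~\ref{Lm:Stability}, this system is stable iff there is $P>0$ with $-(A+BK)^\T P-P(A+BK)>0$. This is bilinear in $(P,K,A)$, so I would set $M\triangleq P^{-1}$, take $M\triangleq\diag(M_{ii}:i\in\N_N)$ block diagonal, and left/right multiply the inequality by $M$ to obtain the equivalent condition: $\exists\,M>0$ such that $W\triangleq -M(A+BK)^\T-(A+BK)M>0$. Introducing $L\triangleq KM$ (so $L_{ij}=K_{ij}M_{jj}$ by block-diagonality of $M$, hence $K_{ij}=L_{ij}M_{jj}^{-1}$) and using that $B=\diag(B_{ii})$ is block diagonal, a direct block computation gives $W_{ij}=-A_{ij}M_{jj}-M_{ii}A_{ji}^\T-B_{ii}L_{ij}-L_{ji}^\T B_{jj}^\T$. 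Hence $W=[W_{ij}]_{i,j\in\N_N}$ is a symmetric network matrix of the (augmented) network, and Algorithm~\ref{Alg:DistributedPositiveDefiniteness} applies: the closed loop is stable whenever $\tilde W_{ii}>0$ can be enforced at every $\Sigma_i$.

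Next I would verify that, at step $i$, the blocks $\{W_{ij}:j\in\N_{i-1}\}$ and $W_{ii}$ are linear in the local decision variables $M_{ii}$, $\{L_{ij},L_{ji}:j\in\N_i\}$, $\{A_{ij}:j\not\in\E_i\cap\N_{i-1}\}$. The term $-A_{ij}M_{jj}$ is linear because for $j<i$ the block $M_{jj}$ was fixed at step $j$ and is a known constant here (and for $j=i$, $A_{ii}$ is fixed by Assumption~\ref{As:FixedBlocks}); the terms $-B_{ii}L_{ij}$ and $-L_{ji}^\T B_{jj}^\T$ are linear because $B$ is known and block diagonal. The only obstruction is $-M_{ii}A_{ji}^\T$ for those $A_{ji}$ that are themselves decision variables of $\Sigma_i$ (i.e.\ $j\not\in\F_i\cap\N_{i-1}$), where the product is bilinear. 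I would remove it by the change of variables $Q_{ji}\triangleq A_{ji}M_{ii}$, so that $-M_{ii}A_{ji}^\T$ is replaced by $-Q_{ji}^\T$; this turns $W_{ij}$ into \eqref{Eq:Th:StabilizationUnderFSF2}, now linear in $M_{ii},L_{ij},L_{ji},A_{ij},Q_{ji}$. Then, exactly as in the remark following Lemma~\ref{Lm:MainLemma}, a Schur-complement rewriting of $\tilde W_{ii}>0$ in \eqref{Eq:Lm:MainLemma1} makes it linear in $\{W_{ij}:j\in\N_{i-1}\}$, so the feasibility test at $\Sigma_i$ is precisely the LMI problem $\mathbb P_2$ in \eqref{Eq:Th:StabilizationUnderFSF}.

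To close the argument I would run the composition: if $\mathbb P_2$ is feasible at every $\Sigma_i$, then there exist $M_{ii}>0$ with $\tilde W_{ii}>0$ for all $i\in\N_N$; by Lemma~\ref{Lm:MainLemma} this gives $W>0$; hence $P\triangleq M^{-1}>0$ satisfies $-(A+BK)^\T P-P(A+BK)>0$; and Lemma~\ref{Lm:Stability} yields stability of the closed loop. Finally I would read off the synthesized quantities: the controller gains $K_{ij}=L_{ij}M_{jj}^{-1}$ and $K_{ji}=L_{ji}M_{ii}^{-1}$ for $j\in\N_i$ (the $j\not\in\E_i$, $j\not\in\F_i$ ones being the new feedback interconnections as in \eqref{Eq:Th:StabilizationUnderFSF3}), the new plant interconnections $A_{ij}$ obtained directly (they were kept as variables), and $A_{ji}\triangleq Q_{ji}M_{ii}^{-1}$ recovered from the substitution. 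The main delicate point -- as in Theorem~\ref{Th:CTLTIStability} -- is the bookkeeping of which blocks are already fixed versus free at step $i$; in particular, it is precisely the block-diagonality of $B$ and of $M$ that prevents cross terms such as $B_{ik}L_{kj}$ with $k\ne i$ from appearing, which would otherwise couple decision variables across subsystems and destroy the decentralized LMI structure, leaving a single genuinely bilinear term that the $Q_{ji}$ substitution handles.
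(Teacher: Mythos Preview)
Your proposal is correct and follows essentially the same approach as the paper: define $M=\diag(M_{ii})$, $L=KM$, set $W=-MA^\T-AM-L^\T B^\T-BL$, derive the block form $W_{ij}$, and remove the single bilinear term $-M_{ii}A_{ji}^\T$ via $Q_{ji}\triangleq A_{ji}M_{ii}$ before invoking Lemma~\ref{Lm:MainLemma}. You actually spell out the congruence step $M=P^{-1}$ and the role of the block-diagonality of $B$ more explicitly than the paper, which compresses that into ``Starting from Lm.~\ref{Lm:Stability}, it is easy to show\ldots''.
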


\begin{proof}
    Let us define $M \triangleq \diag(M_{ii}:i\in\N_N)$, $W \triangleq -MA^\T-AM-L^\T B^\T-BL$ and $K=LM^{-1}$, where now $A=[A_{ij}]_{i,j\in\N_N}$ and $K=[K_{ij}]_{i,j\in\N_N}$ includes variable interconnection blocks $\{A_{ij}:j\not\in\E_i\}$, $\{A_{ji}:j\not\in\F_i\}$ and $\{K_{ij}:j\not\in\E_i\}$, $\{K_{ji}:j\not\in\F_i\}$, respectively (replacing the fixed $\0$ blocks that were there in $A$ \eqref{Eq:NetworkedSystemDynamics} and $K$ \eqref{Eq:GlobalControl}). 

    Starting from Lm. \ref{Lm:Stability}, it is easy to show that if there exists $M>0$ and $L$ such that $W>0$, the feedback controller gains given by $K=LM^{-1}$ guarantees the closed-loop stability of the networked system (with new interconnections). Based on the above definition of $W=[W_{ij}]_{i,j\in\N_N}$, it is a symmetric network matrix (see Def. \ref{Def:NetworkMatrices}) where  
    \begin{equation}\label{Eq:Th:StabilizationUnderFSFStep1}
        W_{ij} = -M_{ii}A_{ji}^\T - A_{ij}M_{jj} -L_{ji}^\T B_{jj}^\T - B_{ii}L_{ij}.
    \end{equation}
    Therefore, we can use Alg. \ref{Alg:DistributedPositiveDefiniteness} to enforce $W>0$ in a decentralized and compositional manner via enforcing $\tilde{W}_{ii}>0$ at each subsystem $\Sigma_i,i\in\N_N$ sequentially.

    As in the proof of Th. \ref{Th:CTLTIStability}, to make this local enforcement $\tilde{W}_{ii}>0$ an LMI problem, we need to replace any bi-linear term in $W_{ij}$ \eqref{Eq:Th:StabilizationUnderFSFStep1} using a change of variables. In this case, the term $-M_{ii}A_{ji}^\T$ in $W_{ij}$ \eqref{Eq:Th:StabilizationUnderFSFStep1} is bi-linear, and thus we introduce:
    \begin{equation}
        Q_{ji}^\T \triangleq M_{ii}A_{ji}^\T \iff Q_{ji} \triangleq A_{ji}M_{ii}, \ \ \forall j\not\in\F_i\cap\N_{i-1},
    \end{equation}
    which transforms $W_{ij}$ in \eqref{Eq:Th:StabilizationUnderFSFStep1} into the form \eqref{Eq:Th:StabilizationUnderFSF2}. Consequently, testing $\tilde{W}_{ii}>0$ takes the form of an LMI problem \eqref{Eq:Th:CTLTIStability}. The remainder of the proof is similar to Th. \ref{Th:CTLTIStability}, and is therefore omitted.
\end{proof}

\subsection{Enforcing Dissipativity and Dissipativate-ability}

Next, we provide decentralized and compositional interconnection topology synthesis techniques to ensure dissipativity and dissipativate-ability. In particular, we consider the $(Q,S,R)$-dissipativity property introduced in Def. \ref{Def:QSRDissipativity}, and regarding the specification matrices $Q,S$ and $R$, we assume: (1) they are network matrices, (2) $Q$ is a block diagonal, (3) $-Q>0$ (see Rm. \ref{Rm:QSRDissipativityVariants}), and (3) $R=R^\T$.

\begin{theorem} \label{Th:CTLTIQSRDissipativity}(Dissipativity via Topology Synthesis)
The networked system \eqref{Eq:NetworkedSystemDynamics} (where $C,D$ are block diagonal) under $w(t)=\0$ is $(Q,S,R)$-dissipative from $u(t)$ to $y(t)$ if at each subsystem $\Sigma_i,i\in\N_N$, the LMI problem  
\begin{equation}\label{Eq:Th:CTLTIQSRDissipativity}
\begin{aligned}
    \mathbb{P}_3: \text{Find}& \ \ P_{ii}, \{G_{ij}:j\not\in\E_i\}, \{A_{ji}:j\not\in\F_i\},\\
    & \ \ \{H_{ij}:j\not\in\E_i\}, \{B_{ji}:j\not\in\F_i\}\\
    \text{such that}& \ \ P_{ii} > 0, \ \ \tilde{W}_{ii}>0,
\end{aligned}
\end{equation}
is feasible, where $\tilde{W}_{ii}$ is computed from Alg. \ref{Alg:DistributedPositiveDefiniteness} (Steps: 3-16) when analyzing $W = [W_{ij}]_{i,j\in\N_N}>0$ with  
\begin{equation}\label{Eq:Th:CTLTIQSRDissipativity2}
    W_{ij} = 
    \bm{
     W_{ij}^{11} &  W_{ij}^{12}  & C_{ii}^\T e_{ij} \\
    W_{ij}^{21}          & D_{ii}^\T S_{ij} + S_{ji}^\T D_{jj} + R_{ij}    & D_{ii}^\T e_{ij} \\
C_{jj}e_{ij} & D_{jj}e_{ij} & -Q_{ii}^{-1}e_{ij}
    }.
\end{equation}
where
\begin{equation*}
    \begin{aligned}
        W_{ij}^{11} =&\ -P_{ii} A_{ij}\mb{1}_{\{j\in\bar{\E}_i\}} - G_{ij}\mb{1}_{\{j\not\in\E_i\}} - A_{ji}^\T P_{jj},\\
        W_{ij}^{12} =&\ -P_{ii}B_{ij}\mb{1}_{\{j\in\bar{\E}_i\}} - H_{ij}\mb{1}_{\{j\not\in\E_i\}} + C_{ii}^\T S_{ij},\\
        W_{ij}^{21} =&\ -B_{ji}^\T P_{jj} + S_{ji}^\T C_{jj}. 
    \end{aligned}
\end{equation*}
The new system interconnections are:
$\{A_{ij}\triangleq P_{ii}^{-1} G_{ij}: j\not\in\E_i\cap\N_{i-1}\}$ and 
$\{A_{ji}:j\not\in\F_i\cap\N_{i-1} \}$, and the new input interconnections are:
$\{B_{ij}\triangleq P_{ii}^{-1} H_{ij}: j\not\in\E_i\cap\N_{i-1}\}$ and 
$\{B_{ji}:j\not\in\F_i\cap\N_{i-1} \}$.
\end{theorem}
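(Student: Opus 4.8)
The plan is to mirror the proofs of Theorems~\ref{Th:CTLTIStability} and~\ref{Th:StabilizationUnderFSF}, only now starting from the dissipativity LMI of Lemma~\ref{Lm:QSRDissipativity} rather than the stability LMI of Lemma~\ref{Lm:Stability}. First I would set $P \triangleq \diag(P_{ii}:i\in\N_N)$ and let $A=[A_{ij}]_{i,j\in\N_N}$ and $B=[B_{ij}]_{i,j\in\N_N}$ now carry the variable interconnection blocks $\{A_{ij}:j\not\in\E_i\}$, $\{A_{ji}:j\not\in\F_i\}$, $\{B_{ij}:j\not\in\E_i\}$, $\{B_{ji}:j\not\in\F_i\}$ in place of the zero blocks they had in the original \eqref{Eq:NetworkedSystemDynamics}. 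By Lemma~\ref{Lm:QSRDissipativity}, the re-wired networked system with $w(t)=\0$ is $(Q,S,R)$-dissipative from $u$ to $y$ as soon as there is $P>0$ rendering the $3\times3$ block matrix in \eqref{Eq:Lm:QSRDissipativity} positive; I would use the strict version $>0$ (a mild tightening, since $>0$ implies $\geq0$) so that Algorithm~\ref{Alg:DistributedPositiveDefiniteness} applies.

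Next I would show that this $3\times3$ block matrix, say $\Psi$, is a block-block matrix whose nine $N\times N$ component blocks --- $-A^\T P-PA$, $-PB+C^\T S$, $C^\T$, $-B^\T P+S^\T C$, $D^\T S+S^\T D+R$, $D^\T$, $C$, $D$, $-Q^{-1}$ --- are each network matrices of the re-wired network. This follows from Lemma~\ref{Lm:NetworkMatrixProperties}: $P$ and $Q^{-1}$ are block-diagonal network matrices, $C$ and $D$ are block diagonal (the theorem hypothesis, cf.\ Assumption~\ref{As:BlockDiagonal}), and $S,R$ are network matrices with $-Q>0$, $R=R^\T$, so products such as $PB$, $PA$, $C^\T S$, $D^\T S$ and sums such as $D^\T S+S^\T D+R$ remain network matrices. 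Hence by Lemma~\ref{Lm:NetworkMatrixProperties}(3) the block-element-wise rearrangement $W\triangleq\mbox{BEW}(\Psi)$ is a network matrix of the re-wired network, and by Lemma~\ref{Lm:AlternativeLMI_BEW}, $\Psi>0\iff W>0$. Collecting the $(i,j)$ block of each of the nine components --- splitting in each the ``known'' contribution ($j\in\bar{\E}_i$ or $j\in\bar{\F}_i$) from the ``free'' one --- reproduces exactly the $3\times3$ array \eqref{Eq:Th:CTLTIQSRDissipativity2} for $W_{ij}$. Then, as in Theorems~\ref{Th:CTLTIStability}--\ref{Th:StabilizationUnderFSF}, I would invoke Lemma~\ref{Lm:MainLemma}/Algorithm~\ref{Alg:DistributedPositiveDefiniteness} to reduce $W>0$ to the local tests $\tilde{W}_{ii}>0$, $i\in\N_N$.

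The final step is linearization. In $W_{ij}$ for $j\in\N_{i-1}$, the terms $-A_{ji}^\T P_{jj}$ and $-B_{ji}^\T P_{jj}$ are already affine in the step-$i$ unknowns because $P_{jj}$ is fixed once $j<i$ has been processed; the only bilinear terms are $-P_{ii}A_{ij}$ and $-P_{ii}B_{ij}$, which occur precisely when $j\in\N_{i-1}$ with $j\not\in\E_i$ (there both $P_{ii}$ and $A_{ij}$, resp.\ $B_{ij}$, are unknowns). Substituting $G_{ij}\triangleq P_{ii}A_{ij}$ and $H_{ij}\triangleq P_{ii}B_{ij}$ for such $j$ removes them, so the entries of \eqref{Eq:Th:CTLTIQSRDissipativity2} become affine in $P_{ii}$, $\{G_{ij}\}$, $\{H_{ij}\}$, $\{A_{ji}\}$, $\{B_{ji}\}$; combined with a Schur-complement rewriting of $\tilde{W}_{ii}>0$ (as noted after Lemma~\ref{Lm:MainLemma}), the test at $\Sigma_i$ becomes the LMI feasibility problem $\mathbb{P}_3$ in \eqref{Eq:Th:CTLTIQSRDissipativity}. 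If every $\mathbb{P}_3$ is feasible then $P_{ii}>0$ and $\tilde{W}_{ii}>0$ for all $i$, hence $P>0$ with $W>0$, hence $\Psi>0$; recovering $A_{ij}=P_{ii}^{-1}G_{ij}$, $B_{ij}=P_{ii}^{-1}H_{ij}$ for $j\not\in\E_i\cap\N_{i-1}$ and keeping the directly obtained $A_{ji},B_{ji}$ for $j\not\in\F_i\cap\N_{i-1}$, Lemma~\ref{Lm:QSRDissipativity} yields $(Q,S,R)$-dissipativity of the re-wired networked system.

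I expect the only genuine obstacle to be bookkeeping: verifying that each of the nine component matrices of $\Psi$ is a network matrix of the re-wired graph and then correctly assembling their $(i,j)$ blocks, with the $\mb{1}_{\{\cdot\}}$ indicators separating fixed from free contributions, into the claimed form \eqref{Eq:Th:CTLTIQSRDissipativity2} (in particular, the $e_{ij}$ factors that appear in the rows/columns coming from the block-diagonal components $C$, $D$, $Q^{-1}$). Once that is in place, the change of variables and the appeal to Algorithm~\ref{Alg:DistributedPositiveDefiniteness} are routine, being essentially identical to the earlier theorems, so that part of the write-up can reasonably be abbreviated.
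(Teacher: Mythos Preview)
Your proposal is correct and follows essentially the same approach as the paper: define $P=\diag(P_{ii})$, form $W\triangleq\mbox{BEW}(\Psi)$ where $\Psi$ is the $3\times3$ dissipativity matrix of Lemma~\ref{Lm:QSRDissipativity} with the variable interconnection blocks inserted, invoke Lemma~\ref{Lm:AlternativeLMI_BEW} for the equivalence $\Psi>0\iff W>0$, and then proceed as in Theorem~\ref{Th:CTLTIStability} with the change of variables $G_{ij}=P_{ii}A_{ij}$, $H_{ij}=P_{ii}B_{ij}$. The paper's own proof is only a brief sketch that omits the network-matrix verification and block assembly you spell out, so your write-up is in fact more complete.
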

\begin{proof}
The proof starts by defining $P \triangleq \diag(P_{ii}:i\in\N_N)$ and 
\begin{equation*}
    W \triangleq BEW \big(
    \scriptsize\bm{
    -A^\T P - P A     &  -PB + C^\T S   & C^\T \\
    -B^\T P + S^\T C  & D^\T S + S^\T D + R    & D^\T \\ 
    C & D & -Q^{-1}    
    }\big) 
\end{equation*}
(inspired by \eqref{Eq:Lm:QSRDissipativity} and including the interested variable interconnection blocks), and proceeds using Prop. \ref{Lm:QSRDissipativity} in a similar manner to the proof of Th. \ref{Th:CTLTIStability} (note that here Lm. \ref{Lm:AlternativeLMI_BEW} is needed to deduce $W=\text{BEW}(\Psi)>0 \iff \Psi>0$). Therefore, explicit details of the proof are omitted here. 
\end{proof}

\begin{theorem}\label{Th:DissipativationUnderFSF}
(Dissipativate-ablity via Topology Synthesis) 
The networked system \eqref{Eq:NetworkedSystemDynamics} (where $B,C,F$ are block diagonal) under $D=\0$ and local state feedback control \eqref{Eq:LocalStateFeedbackControl} is $(Q,S,R)$-dissipative from $w(t)$ to $y(t)$ if at each subsystem $\Sigma_i,i\in\N_N$, the LMI problem    
\begin{equation}\label{Eq:Th:DissipativationUnderFSF}
\begin{aligned}
    \mathbb{P}_4: \text{Find}&\ \ M_{ii}, L_{ii}, \{L_{ij}:j\in\N_{i-1}\}, \{L_{ji}:j\in\N_{i-1}\},\\
    &\ \ \{A_{ij}:j\not\in\E_i\cap\N_{i-1}\},\{G_{ji}:j\not\in\F_i\cap\N_{i-1}\},\\
    &\ \ \{E_{ij}:j\not\in\E_i\cap\N_{i-1}\},\{E_{ji}:j\not\in\F_i\cap\N_{i-1}\},\\
    \text{such that}& \ \ M_{ii} > 0, \ \ \tilde{W}_{ii}>0, 
\end{aligned}
\end{equation}
is feasible, where $\tilde{W}_{ii}$ is computed from Alg. \ref{Alg:DistributedPositiveDefiniteness} (Steps: 3-16) when enforcing $W = [W_{ij}]_{i,j\in\N_N}>0$ with $W_{ij}$ given in \eqref{Eq:Th:DissipativationUnderFSF2}. 
The local state feedback controller gains (which include new feedback interconnections $\{K_{ij}:j\not\in\E_i\cap\N_{i-1}\}$ and $\{K_{ji}:j\not\in\F_i\cap\N_{i-1}\}$) are given by \eqref{Eq:Th:StabilizationUnderFSF3}, the new system interconnections are: 
$\{A_{ij}:j\not\in\E_i\cap\N_{i-1}\}$ and $\{A_{ji}\triangleq G_{ji}M_{ii}^{-1}:j\not\in\F_i\cap\N_{i-1}\}$, and the new input interconnections are:
$\{E_{ij}:j\not\in\E_i\cap\N_{i-1}\}$ and $\{E_{ji}:j\not\in\F_i\cap\N_{i-1}\}$.
\end{theorem}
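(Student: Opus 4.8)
The plan is to follow the template of the proof of Theorem~\ref{Th:StabilizationUnderFSF} for the state-feedback linearization, layered on top of the BEW/dissipativity machinery used in the proof of Theorem~\ref{Th:CTLTIQSRDissipativity}. First I would set up the closed loop: define $M \triangleq \diag(M_{ii}:i\in\N_N)$, $L \triangleq [L_{ij}]_{i,j\in\N_N}$, $K \triangleq LM^{-1} = [K_{ij}]_{i,j\in\N_N}$, and let $A=[A_{ij}]$, $E=[E_{ij}]$, $K=[K_{ij}]$ now carry the variable interconnection/feedback blocks $\{A_{ij}:j\not\in\E_i\}$, $\{A_{ji}:j\not\in\F_i\}$, $\{E_{ij}:j\not\in\E_i\}$, $\{E_{ji}:j\not\in\F_i\}$, $\{K_{ij}:j\not\in\E_i\}$, $\{K_{ji}:j\not\in\F_i\}$ in place of the nominal $\0$ blocks. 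Since $D=\0$, the closed loop under $u(t)=Kx(t)$ is $\dot x = (A+BK)x + Ew$, $y = Cx + Fw$.

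By Lemma~\ref{Lm:QSRDissipativity} applied from $w$ to $y$ (i.e.\ with $(A_{cl},B_{cl},C_{cl},D_{cl})=(A+BK,\,E,\,C,\,F)$), the closed loop is $(Q,S,R)$-dissipative if there is $P>0$ satisfying the corresponding $3\times 3$ block LMI. That LMI is bilinear in $(P,K)$, so I would perform the congruence transformation by $\diag(M,\I,\I)$ with $M\triangleq P^{-1}$ and substitute $L=KM$: the $(1,1)$ block becomes $-M(A+BK)^\T-(A+BK)M = -MA^\T-AM-L^\T B^\T-BL$, the $(1,2)$ block becomes $-E+MC^\T S$, the $(1,3)$ block becomes $MC^\T$, and the $(2,2),(2,3),(3,3)$ blocks stay $F^\T S+S^\T F+R$, $F^\T$, $-Q^{-1}$. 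Writing $W$ as the $\text{BEW}$ rearrangement of this transformed matrix (as in the proof of Theorem~\ref{Th:CTLTIQSRDissipativity}), Lemma~\ref{Lm:AlternativeLMI_BEW} gives $W>0 \iff$ the transformed matrix is $>0$, which implies the dissipativity inequality of Lemma~\ref{Lm:QSRDissipativity}; moreover the block-diagonality of $B$, $C$, $F$ together with Lemma~\ref{Lm:NetworkMatrixProperties} makes $W=[W_{ij}]_{i,j\in\N_N}$ a symmetric network matrix, so Lemma~\ref{Lm:MainLemma}/Algorithm~\ref{Alg:DistributedPositiveDefiniteness} apply, reducing $W>0$ to a sequence of tests $\tilde W_{ii}>0$ at the subsystems $\Sigma_i$.

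The crux — and the step I expect to require the most care — is checking that at step $i$ the blocks $\{W_{ij}:j\in\N_{i-1}\}$, and hence (after a Schur complement) $\tilde W_{ii}>0$, are affine in the step-$i$ program variables $M_{ii}, L_{ii}, \{L_{ij}\}_{j\in\N_{i-1}}, \{L_{ji}\}_{j\in\N_{i-1}}, \{A_{ij}\}_{j\not\in\E_i\cap\N_{i-1}}, \{G_{ji}\}_{j\not\in\F_i\cap\N_{i-1}}, \{E_{ij}\}_{j\not\in\E_i\cap\N_{i-1}}, \{E_{ji}\}_{j\not\in\F_i\cap\N_{i-1}}$. Here I would use: (i) by the compositional structure of Algorithm~\ref{Alg:DistributedPositiveDefiniteness} and Assumption~\ref{As:FixedBlocks}, every quantity decided at earlier steps ($M_{jj}$, $L_{j\cdot}$, $A_{j\cdot}$, $E_{j\cdot}$ for $j<i$) is constant; (ii) since $B,C$ are block diagonal, $BL$, $MC^\T$, $MC^\T S$ split block-diagonally, so $B_{ii}L_{ij}$, $M_{ii}C_{ii}^\T$ are linear in the listed variables; (iii) the terms $-A_{ij}M_{jj}$ (with $j<i$, hence $M_{jj}$ constant) and $-E_{ij}$, $-E_{ji}$ are already linear, whereas $-M_{ii}A_{ji}^\T$ is bilinear — which is precisely why the change of variables $G_{ji}\triangleq A_{ji}M_{ii}$ (equivalently $G_{ji}^\T=M_{ii}A_{ji}^\T$, recovering $A_{ji}=G_{ji}M_{ii}^{-1}$) is introduced. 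Note the $E$-blocks need no substitution because the congruence by $M=P^{-1}$ cancels the $P$ premultiplying $E$; this is the only nonconvexity that must be absorbed, yielding the stated $W_{ij}$ in \eqref{Eq:Th:DissipativationUnderFSF2}.

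Finally I would close as in Theorem~\ref{Th:CTLTIStability}: feasibility of all local problems $\mathbb{P}_4$ gives $M_{ii}>0$ and the $L$-, $A$-, $G$-, $E$-blocks with $\tilde W_{ii}>0$ for every $i\in\N_N$, hence $M=\diag(M_{ii})>0$ (equivalently $P=M^{-1}>0$) and $L$ with $W>0$; undoing $\text{BEW}$ and the congruence, the dissipativity LMI of Lemma~\ref{Lm:QSRDissipativity} holds for the networked system with synthesized interconnections $\{A_{ij}\}$, $\{A_{ji}=G_{ji}M_{ii}^{-1}\}$, $\{E_{ij}\}$, $\{E_{ji}\}$ and controller $K=LM^{-1}$, i.e.\ $K_{ij}=L_{ij}M_{jj}^{-1}$, $K_{ji}=L_{ji}M_{ii}^{-1}$ as in \eqref{Eq:Th:StabilizationUnderFSF3}. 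I expect no further subtlety beyond the fixed-vs-free bookkeeping of point (iii); the rest is the same linearization-and-Schur-complement routine already carried out in Theorems~\ref{Th:CTLTIStability}--\ref{Th:CTLTIQSRDissipativity}.
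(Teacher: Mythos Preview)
Your proposal is correct and follows exactly the approach the paper intends: the paper's own proof is the single sentence ``The proof is similar to that of Th.~\ref{Th:CTLTIQSRDissipativity},'' and what you have written is precisely that similarity spelled out---the congruence by $\diag(M,\I,\I)$ with $M=P^{-1}$ and $L=KM$ from Theorem~\ref{Th:StabilizationUnderFSF}, layered on the $\text{BEW}$/Lemma~\ref{Lm:AlternativeLMI_BEW} reduction of Theorem~\ref{Th:CTLTIQSRDissipativity}, with the single bilinear term $M_{ii}A_{ji}^\T$ absorbed via $G_{ji}\triangleq A_{ji}M_{ii}$. Nothing further is needed.
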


\begin{proof}
The proof is similar to that of Th. \ref{Th:CTLTIQSRDissipativity}.
\end{proof}

\begin{remark}
    (Designing Intrinsic Parameters) When using the decentralized LMI problems proposed in Th. \ref{Th:CTLTIQSRDissipativity} and Th. \ref{Th:DissipativationUnderFSF}, if the application allows, we can also treat some subsystem intrinsic parameters (in addition to interconnection parameters) as design variables - without compromising the LMI format of the problem. For example, in the LMI problem \eqref{Eq:Th:CTLTIQSRDissipativity}, we can include $C_{ii}$ and/or $D_{ii}$ as design variables as $W_{ij}$ \eqref{Eq:Th:CTLTIQSRDissipativity2} is linear in both $C_{ii}$ and $D_{ii}$. 
\end{remark}

\begin{figure*}[!b]
    \centering
    \hrulefill
\begin{equation}\label{Eq:Th:DissipativationUnderFSF2}
    W_{ij} = \bm{
    -A_{ij}M_{jj}-M_{ii}A_{ji}^\T\mb{1}_{\{j\in\bar{\F}_i\}}-G_{ji}^\T\mb{1}_{\{j\not\in\F_i\}}-B_{ii}L_{ij}-L_{ji}^\T B_{jj} & -E_{ij}+M_{ii}C_{ii}^\T S_{ij} & M_{ii}C^\T_{ii}e_{ij} \\ 
    -E_{ji}^\T+S_{ji}^\T C_{jj}M_{jj} & F_{ii}^\T S_{ij} + S_{ji}F_{jj} +R_{ij} & F_{ii}^\T e_{ij} \\
    C_{jj}M_{jj}e_{ij} & F_{jj} e_{ij} & -Q_{ii}^{-1}e_{ij} 
    }
    \end{equation}
\end{figure*}


\section{Case Study}\label{Sec:CaseStudy}

In this section, we compare: (1) the decentralization-based topology synthesis (\textbf{DeTS}) approach proposed for linear networked systems in this paper with (2) the dissipativity-based topology synthesis (\textbf{DiTS}) approach proposed for non-linear networked systems in our recent work \cite{WelikalaP52022}. In particular, we use two variants of the \textbf{DiTS} approach based on the accuracy of the used dissipativity information of the subsystems. They are denoted by the acronyms \textbf{W-DiTS} and \textbf{S-DiTS}, representing scenarios where the available dissipativity information of the subsystems is weak (less precise) and strong (more precise), respectively. Note that, due to space constraints and for simplicity, we limit this case study to scenarios where topology synthesis is required to ensure the stability of a certain randomly generated  networked system.

\subsection{Considered Networked System}

In this case study, we consider a randomly generated networked system of the form \eqref{Eq:NetworkedSystemDynamics} with $N=5, n_i=3,\forall i\in\N_N$ and $B=E=\0$. The initial values of a subset of system matrices $\{A_{ij}:j\in\bar{\E}_i, i\in\N_N\}$ are given in \eqref{Eq:InitialSystemMatrices}. The corresponding initial interconnection topology is shown in the graph in Fig. \ref{Fig:NumericalExample} (left). It is worth noting that this initial networked system is unstable and the decentralized stability analysis proposed in \cite{WelikalaP32022} returns inconclusive. 

\begin{figure*}[!b]
\centering
    \hrulefill
    \begin{equation}\label{Eq:InitialSystemMatrices}
    \begin{aligned}
    &\scriptsize A_{22} = 
        \begin{bmatrix}
         -2.06 &   1.43 &  -1.83 \cr
         -1.79 &  -2.01 &   1.78 \cr
          1.50 &  -2.08 &  -2.00
       \end{bmatrix},\ 
       A_{23}
       \begin{bmatrix}
         -1.35 &  -1.62 &  -8.97 \cr
          1.51 &  -1.40 & -13.70 \cr
          8.99 &  13.68 &  -1.43
       \end{bmatrix},\ 
       A_{24} = 
       \begin{bmatrix}
         -2.96 &   0.12 &   2.47 \cr
          1.94 &  -4.44 &   0.11 \cr
          1.53 &   1.94 &  -2.98
       \end{bmatrix},\\ 
       &\scriptsize A_{55} = 
       \begin{bmatrix}
         -2.86 &  -0.56 &  -1.64 \cr
         -0.56 &  -2.20 &  -1.06 \cr
         -1.64 &  -1.06 &  -4.93
       \end{bmatrix},\ 
       A_{52} = 
       \begin{bmatrix}
         -2.54 &   1.13 &  -0.13 \cr
         -0.02 &  -1.56 &   1.40 \cr
          1.14 &   0.83 &  -1.78
       \end{bmatrix},\ 
       A_{54} = 
       \begin{bmatrix}
         -1.42 &   0.09 &  -0.61 \cr
          0.09 &  -0.30 &  -0.18 \cr
         -0.61 &  -0.18 &  -0.86
       \end{bmatrix}.
       \normalsize
   \end{aligned}
    \end{equation}
    \normalsize
\end{figure*}

\begin{figure}[!h]
    \centering
    \includegraphics[width=3in]{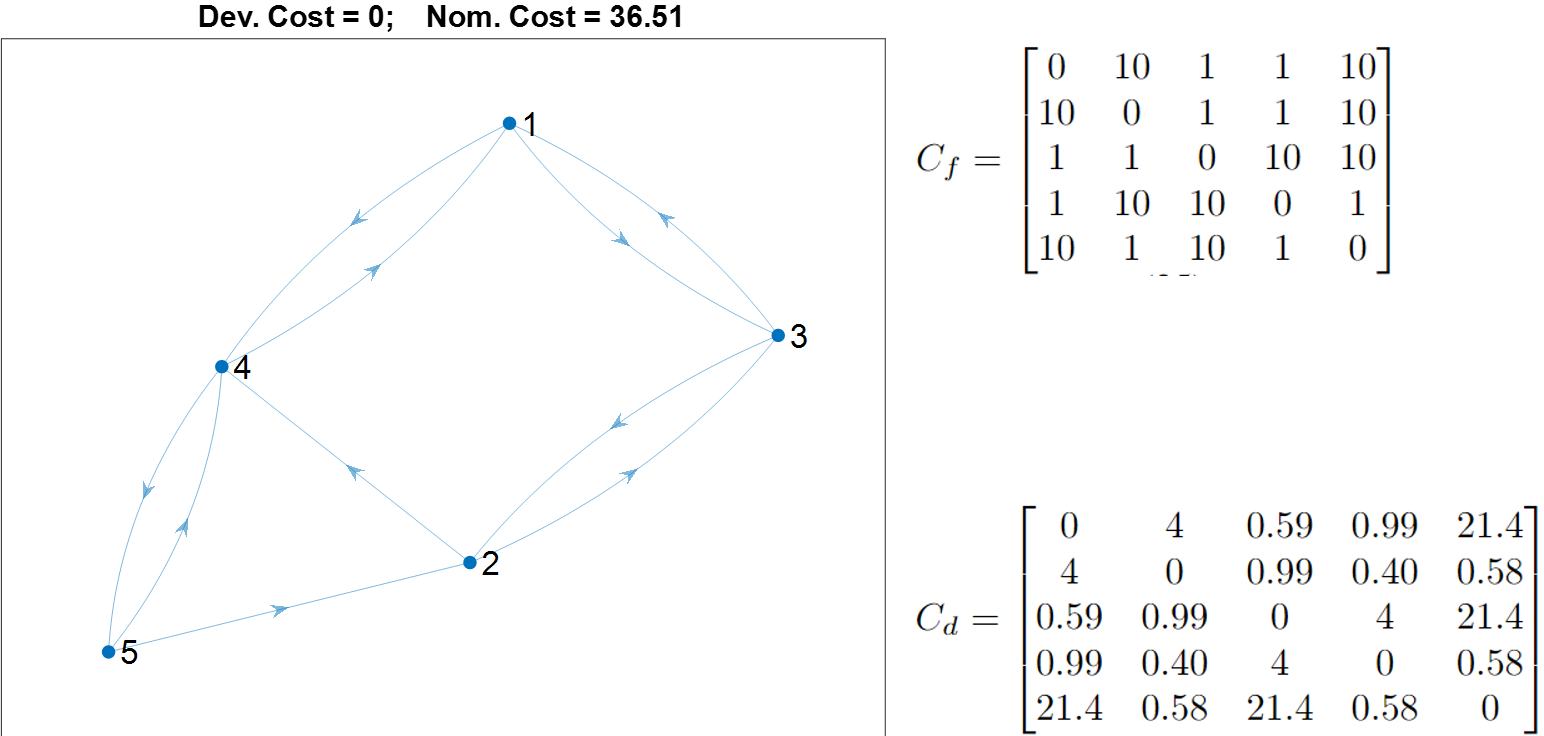}
    \caption{Numerical Example : Initial Topology}
    \label{Fig:NumericalExample}
\end{figure}

Two interconnection cost matrices inspired by this initial interconnection topology are shown in the same figure (i.e., $C_f$ and $C_d$). Note that $C_f$ has fixed cost levels while $C_d$ has graphical distance-inspired cost levels for different interconnections. Elements of these cost matrices are used in the topology synthesis processes (e.g., as $[c_{ij}]_{i,j\in\N_N}$ in \eqref{Eq:LMIObjective}) to penalize deviations from the initial topology. 

Since the proposed topology synthesis approach in this paper (\textbf{DeTS}) is decentralized, for comparison purposes, we use the following two centralized cost functions: 
\begin{equation}\label{Eq:CentralizedCostFunctions}
\begin{aligned}
    J_{Dev} \triangleq& \sum_{i,j\in\N_N, i\neq j} c_{ij} \Vert A_{ij}^*-\bar{A}_{ij}\Vert, \\ 
    J_{Nom} \triangleq& \sum_{i,j\in\N_N, i\neq j} \Vert A_{ij}^* \Vert, 
\end{aligned}
\end{equation}
where $c_{ij}$ is the interconnection cost coefficient (taken from either $C_f$ or $C_d$), $\bar{A}_{ij}$ is the initial (given) system matrix block, and $A_{ij}^*$ is the optimally synthesized system matrix block - corresponding to the interconnection $(\Sigma_j,\Sigma_i)$. Note that, in \eqref{Eq:CentralizedCostFunctions}, $J_{Dev}$ represents the weighted deviation from the initial topology while $J_{Nom}$ represents the nominal coupling strength of the synthesized topology.


\subsection{Dissipativity Based Topology Synthesis (DiTS)}
The work in \cite{WelikalaP52022} considers networked systems comprised of non-linear subsystems $\tilde{\Sigma}_i:u_i\rightarrow y_i, i\in\N_N$ interconnected through a static interconnection matrix $M$ (e.g., see Fig. \ref{Fig:NonLinearNetworkedSystem}). A key advantage of the solution proposed in \cite{WelikalaP52022} is that it only requires the knowledge of $(Q,S,R)$-dissipativity properties of the subsystems: $\{(Q_i,S_i,R_i):i\in\N_N\}$ (in lieu of exact dynamic models of the subsystems). Even though subsystem dissipativity information is limited, it provides a simple, robust, and energy-based representation for the subsystems. In particular, the work in  \cite{WelikalaP52022} uses such subsystem dissipativity information to formulate an LMI problem so as to synthesize the optimal interconnection matrix $M$ (and hence, the interconnection topology) for the non-linear networked system under some minor assumptions. To make this paper self-contained, we have summarized this dissipativity-based topology synthesis approach in the following proposition. 

\begin{proposition} \cite[Prop. 5]{WelikalaP52022} \label{Pr:StabilizationNonLinear}
Under $R_i<0, \forall i\in\N_N$, a stabilizing interconnection matrix $M$ for the non-linear networked system shown in Fig. \ref{Fig:NonLinearNetworkedSystem} can be found via solving the LMI problem (centralized): 
\begin{equation}\label{Eq:Pr:StabilizationNonLinear}
\begin{aligned}
    \mathbb{P}_5: \text{Find}&\ \ L, \{p_i\in\R: i\in\N_N\}\\
    \text{such that}& \ \ p_i > 0, \forall i\in\N_N, \\
    & \ \ \bm{\textbf{R}_p & L \\ L^\T & -(L^\T \textbf{X} + \textbf{X}^\T L + \textbf{Q}_p)}
\end{aligned}
\end{equation}
as $M\triangleq \textbf{R}_p^{-1}L$, where  
$\textbf{R}_p \triangleq \diag(p_iR_i\I:i\in\N_N)$, 
$\textbf{Q}_p \triangleq \diag(p_iQ_i\I:i\in\N_N)$, and 
$\textbf{X} \triangleq \diag(R_i^{-1}S_i:i\in\N_N)$.
\end{proposition}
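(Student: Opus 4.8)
\textbf{Proof proposal for Proposition~\ref{Pr:StabilizationNonLinear}.}
The plan is to reduce the stability of the non-linear networked system to a dissipativity-type matrix inequality on the interconnection matrix $M$, and then to linearize that inequality via a congruence/change-of-variables argument that turns the bilinear product $p_i M$ into the new variable $L$. First I would write the closed-loop map: the interconnection constraint $u = M y$ combined with the subsystem dissipativity inequalities $\dot V_i \le \begin{bmatrix} y_i \\ u_i \end{bmatrix}^\T \begin{bmatrix} Q_i\I & S_i\I \\ S_i\I & R_i\I \end{bmatrix} \begin{bmatrix} y_i \\ u_i \end{bmatrix}$ gives, upon summing with positive multipliers $p_i>0$ (i.e., using the storage function $V = \sum_i p_i V_i$), the dissipation rate $\dot V \le y^\T(\textbf{Q}_p + \textbf{X}^\T\textbf{R}_p + \textbf{R}_p\textbf{X} )y$ after substituting $u=My$ — more precisely one collects the quadratic form in $y$ with the coefficient matrix $\textbf{Q}_p + M^\T \textbf{R}_p^\T \textbf{X}_S^\T + \ldots$ where the $S_i$ cross-terms are absorbed into $\textbf{X}$. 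The point is that asymptotic stability follows if that aggregate quadratic form in $y$ is negative definite; this is a matrix inequality that is quadratic in $M$ because $\textbf{R}_p$ multiplies $M$ on both sides.

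Next I would perform the standard trick to convexify: set $L \triangleq \textbf{R}_p M$ (equivalently $M = \textbf{R}_p^{-1}L$, which is well-defined since $R_i<0$ makes $\textbf{R}_p$ invertible). Substituting $M = \textbf{R}_p^{-1}L$ into the negative-definiteness condition and completing the square / applying a Schur complement with respect to the $\textbf{R}_p$ block produces exactly the $2\times 2$ block LMI displayed in \eqref{Eq:Pr:StabilizationNonLinear}: the $(1,1)$ block is $\textbf{R}_p$ (negative definite, encoding $R_i<0$), the off-diagonal block is $L$, and the $(2,2)$ block is $-(L^\T\textbf{X} + \textbf{X}^\T L + \textbf{Q}_p)$, which is the Schur complement term that certifies the original quadratic form in $y$ is negative definite. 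The variables $\{p_i\}$ enter linearly through $\textbf{R}_p$ and $\textbf{Q}_p$, and $L$ enters linearly, so the resulting condition is a genuine LMI in $(L,\{p_i\})$; recovering $M$ is then immediate.

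The main obstacle I anticipate is bookkeeping the $S_i$ cross-terms correctly so that they land inside $\textbf{X} \triangleq \diag(R_i^{-1}S_i\I)$ after the change of variables — one has to check that $M^\T \textbf{R}_p^\T \textbf{X} = L^\T \textbf{X}$ with the intended definition of $\textbf{X}$, and that the sign conventions on $R_i<0$ make the Schur complement go through in the stated direction (rather than its negation). A secondary point to verify is that negative definiteness of the aggregate quadratic form, together with $p_i>0$ and positive-definiteness of each $V_i$, actually yields asymptotic stability of the interconnection (this is where one invokes a LaSalle-type or strict-dissipativity argument, relying on $R_i<0$ to get strictness); since this is quoted verbatim from \cite[Prop.~5]{WelikalaP52022}, I would simply cite that development rather than re-derive it, and focus the written proof on the algebraic equivalence between the closed-loop dissipation inequality and the displayed LMI.
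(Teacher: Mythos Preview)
Your proposal is sound and follows the standard dissipativity-multiplier argument that underlies this class of results: weighted storage function $V=\sum_i p_iV_i$, substitution of $u=My$ to obtain a quadratic form in $y$ with coefficient $\textbf{Q}_p + \textbf{S}_p M + M^\T\textbf{S}_p + M^\T\textbf{R}_p M$, the change of variables $L=\textbf{R}_pM$ (well-defined because $R_i<0$), and a Schur complement to arrive at the displayed block LMI. You also correctly flag the two places where care is needed --- the identification $\textbf{R}_p^{-1}\textbf{S}_p=\textbf{X}$ (so that $\textbf{S}_pM+M^\T\textbf{S}_p$ becomes $\textbf{X}^\T L + L^\T\textbf{X}$) and the sign of the Schur complement given $\textbf{R}_p<0$.

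There is nothing to compare against in the present paper: the proposition is stated here purely by citation to \cite[Prop.~5]{WelikalaP52022} and no proof is given. Your instinct to simply cite that reference for the Lyapunov/LaSalle step and focus the written argument on the algebraic equivalence is exactly right. One minor caution: the displayed LMI in \eqref{Eq:Pr:StabilizationNonLinear} as printed omits the definiteness sign; since the $(1,1)$ block $\textbf{R}_p$ is negative definite by construction, the intended constraint must be $<0$ (or $\leq 0$), and you should state this explicitly when you carry out the Schur-complement step so the direction matches the stability condition $\textbf{Q}_p + \textbf{X}^\T L + L^\T\textbf{X} + L^\T\textbf{R}_p^{-1}L < 0$.
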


\begin{figure}[!t]
    \centering
    \includegraphics[width=1.75in]{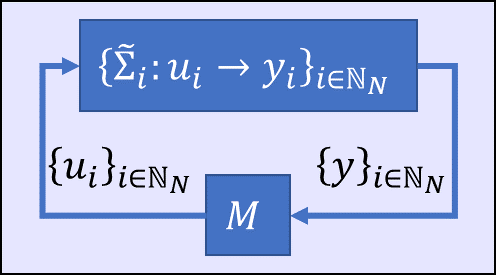}
    \caption{A non-linear networked dynamical system configuration considered in \cite{WelikalaP52022} (compare with Fig. \ref{Fig:LinearNetworkedSystem}).}
    \label{Fig:NonLinearNetworkedSystem}
\end{figure}
 
To apply Prop. \ref{Pr:StabilizationNonLinear} for the considered linear networked system, we first need to identify the corresponding construction of a nonlinear subsystem $\tilde{\Sigma}_i, i\in\N_N$ and the interconnection matrix $M=[M_{ij}]_{i,j\in\N_N}$ (shown in Fig. \ref{Fig:NonLinearNetworkedSystem}). This can be achieved by re-arranging the dynamics of a considered linear subsystem $\Sigma_i, i\in\N_N$ as:
\begin{equation*}
    \Sigma_i:\Big\{  \dot{x}_i(t) = \sum_{j\in\bar{\E}_i} A_{ij}x_j(t) = A_{ii}x_i(t) + \sum_{j\in\E_i} A_{ij} x_j(t).
\end{equation*}
Now, the dynamics of a corresponding non-linear (in name only) subsystem $\tilde{\Sigma}_i,i\in\N_N$ can be written as:
\begin{equation}\label{Eq:NonLinearSubsystem}
\tilde{\Sigma}_i:
\begin{cases}
        \dot{x}_i(t) =& A_{ii} x_i(t) + u_i(t),\\
        y_i(t) =& x_i(t),
\end{cases}
\end{equation}
where $u_i(t) \triangleq \sum_{j\in\E_i} A_{ij}y_j(t) \equiv \sum_{j\in\N_N} M_{ij} y_j(t)$. Therefore, $M_{ij} \triangleq A_{ij} \mb{1}_{\{i,j\in\N_N, i\neq j\}}$, and thus, using Prop. \ref{Pr:StabilizationNonLinear}, we can synthesize the system matrices $\{A_{ij}:i,j\in\N_N, i\neq j\}$ which implies the optimal interconnection topology for the considered networked system.

Recall that, per our As. \ref{As:FixedBlocks}, the system matrices $\{A_{ii},i\in\N_N\}$ and the non-linear subsystems $\{\tilde{\Sigma}_i:i\in\N_N\}$ \eqref{Eq:NonLinearSubsystem} are prespecified. However, to apply Prop. \ref{Pr:StabilizationNonLinear}, we only require the subsystem dissipativity properties: $\{(Q_i,S_i,R_i):i\in\N_N\}$. Here we assume each subsystem $\tilde{\Sigma}_i,i\in\N_N$ to have input and output passivity indices as $\nu_i$ and $\rho_i$, respectively. In other words, subsystem $\tilde{\Sigma}_i$ \eqref{Eq:NonLinearSubsystem} is assumed to be $(-\rho_i\I,0.5\I,-\nu_i\I)$-dissipative (see Def. \ref{Rm:QSRDissipativityVariants}). Candidate values for such passivity indices were estimated by applying Lm. \ref{Lm:QSRDissipativity} under: (1) a trial and error approach (which led to weak/less precise passivity indices), and (2) an LMI-based optimization approach \cite{WelikalaP52022} (which lead to strong/precise passivity indices). It is worth noting that there are several other alternative approaches to estimate such passivity indices (e.g., see \cite{WelikalaP42022,Koch2021,Tanemura2019}). Note that the said two types of passivity indices estimates led to the two dissipativity-based topology synthesis approaches: (1) \textbf{W-DiTS} and (2) \textbf{S-DiTS} mentioned earlier.

Note also that, inspired by Rm. \ref{Rm:LMIObjective}, to penalize deviations from the initial interconnection topology, when solving the LMI problem \eqref{Eq:Pr:StabilizationNonLinear} in Prop. \ref{Pr:StabilizationNonLinear}, we use the objective function
\begin{equation}
 J = \left\Vert [c_{ij} (L_{ij}-p_iR_i\bar{A}_{ij})]_{i,j\in\N_N} \right\Vert.   
\end{equation}

\subsection{Decentralization Based Topology Synthesis (DeTS)}
For the considered networked system in this case study, the application of the proposed \textbf{DeTS} approach is straightforward as it only involves solving the sequence of LMI problems given in Th. \ref{Th:CTLTIStability}. In the implementation, as the LMI objective function, we used \eqref{Eq:LMIObjective} (with $\beta_ii = \Vert P_jj\Vert$) as proposed in Rm. \ref{Rm:LMIObjective}. Note also that, as suggested in Rm. \ref{Rm:Refining}, we considered the possibility of refining all the existing interconnections. Consequently, in numerical results, we observed that it is possible (and even preferred) to get optimal interconnection topologies where some initial interconnections have been removed completely to preserve stability while minimizing deviations from the initial topology.

\subsection{Observations and Discussion}

Figure \ref{Fig:NumericalResults} illustrates the synthesized optimal interconnection topologies under the aforementioned topology synthesis techniques: \textbf{W-DiTS} (Figs. \ref{Fig:NumericalResults}ab), \textbf{S-DiTS} (Figs. \ref{Fig:NumericalResults}cd), and \textbf{DeTS} (Figs. \ref{Fig:NumericalResults}ef), and under the said interconnection cost matrices $C_f$ (left) and $C_d$ (right). Moreover, the observed deviation and nominal cost values proposed in \eqref{Eq:CentralizedCostFunctions} are given in the titles of the subfigures in Fig. \ref{Fig:NumericalResults}.

\begin{figure}[!h]
    \centering
    \begin{subfigure}[t]{0.23\textwidth}
        \centering
        \captionsetup{justification=centering}
        \includegraphics[width=1.6in]{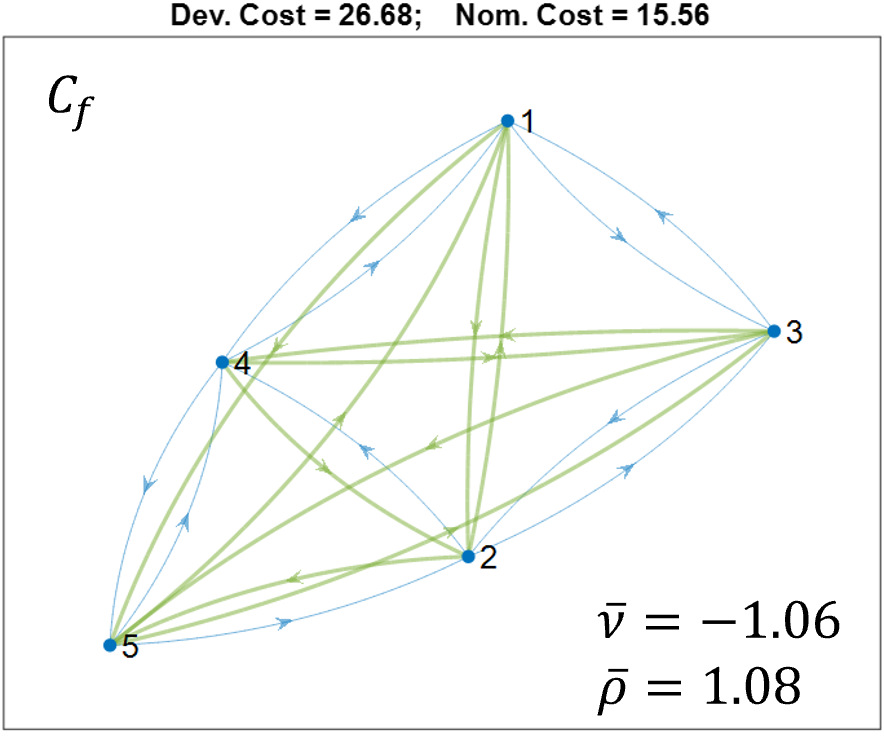}
        \caption{Under \textbf{W-DiTS} and $C_f$.}
        \label{Fig:NE_Fixed_NonLinear_WeakPassivity}
    \end{subfigure}
    \hfill
    \begin{subfigure}[t]{0.23\textwidth}
        \centering
        \captionsetup{justification=centering}
        \includegraphics[width=1.6in]{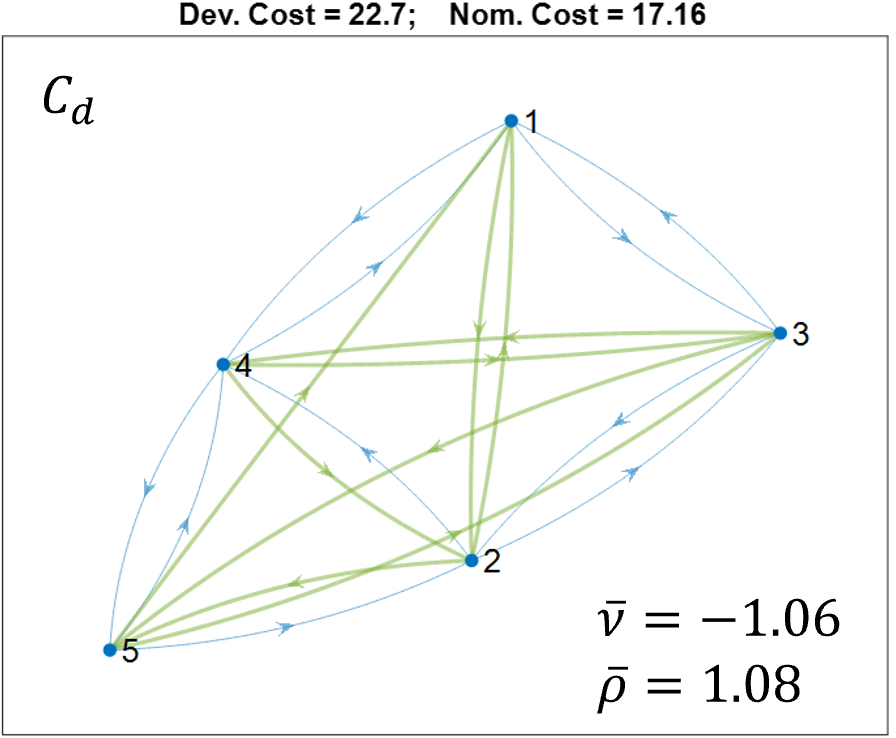}
        \caption{Under \textbf{W-DiTS} and $C_d$.}
        \label{Fig:NE_Dist_NonLinear_WeakPassivity}
    \end{subfigure}
    \begin{subfigure}[t]{0.23\textwidth}
        \centering
        \captionsetup{justification=centering}
        \includegraphics[width=1.6in]{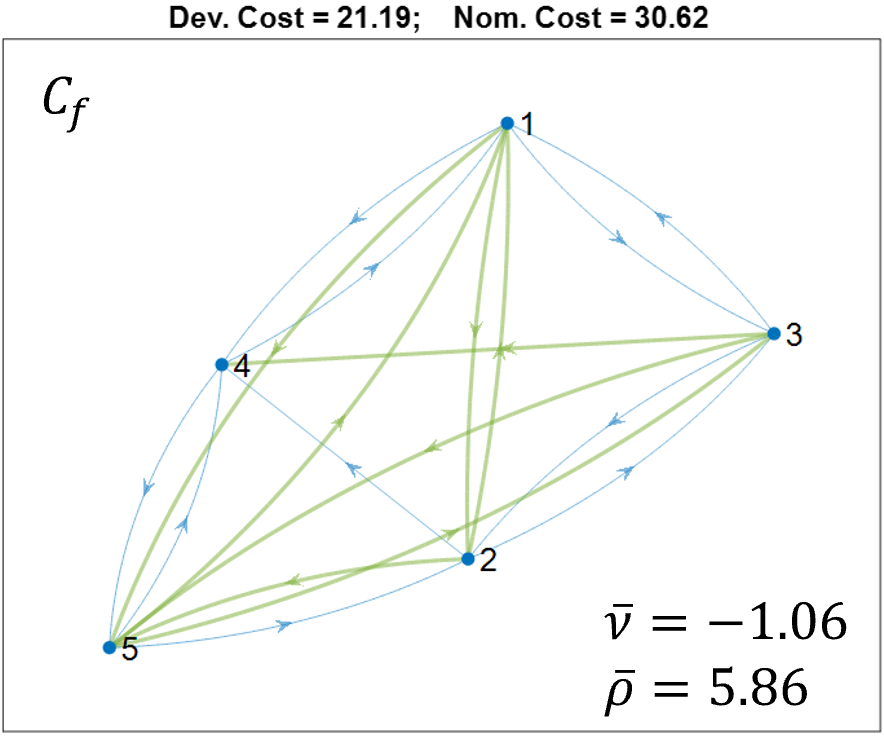}
        \caption{Under \textbf{S-DiTS} and $C_f$.}
        \label{Fig:NE_Fixed_NonLinear_StrongPassivity}
    \end{subfigure}
    \hfill
    \begin{subfigure}[t]{0.23\textwidth}
        \centering
        \captionsetup{justification=centering}
        \includegraphics[width=1.6in]{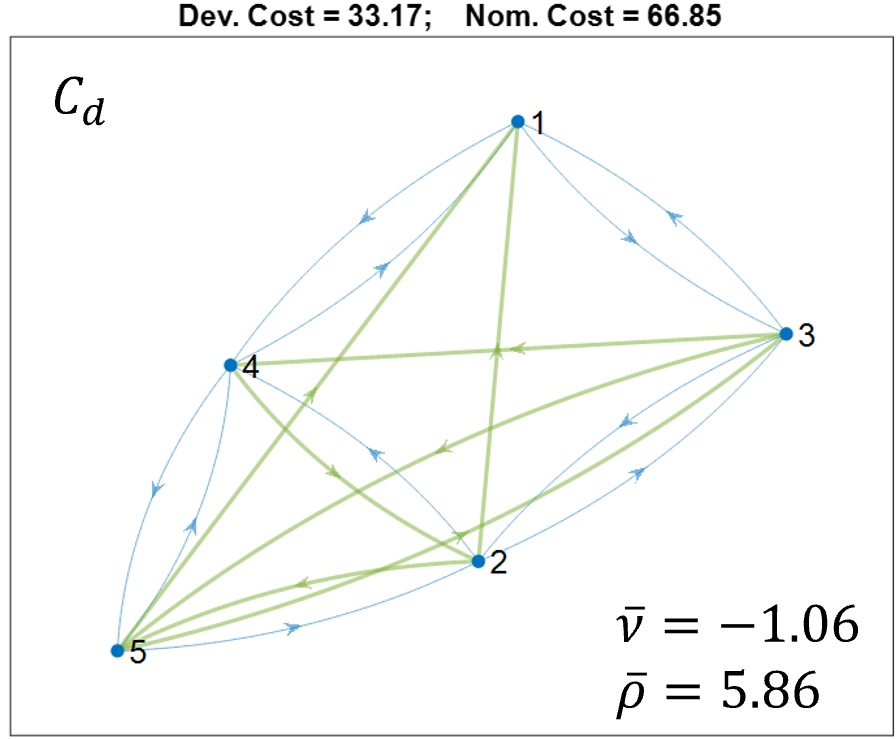}
        \caption{Under \textbf{S-DiTS} and $C_d$.}
        \label{Fig:NE_Dist_NonLinear_StrongPassivity}
    \end{subfigure}
    \begin{subfigure}[t]{0.23\textwidth}
        \centering
        \captionsetup{justification=centering}
        \includegraphics[width=1.6in]{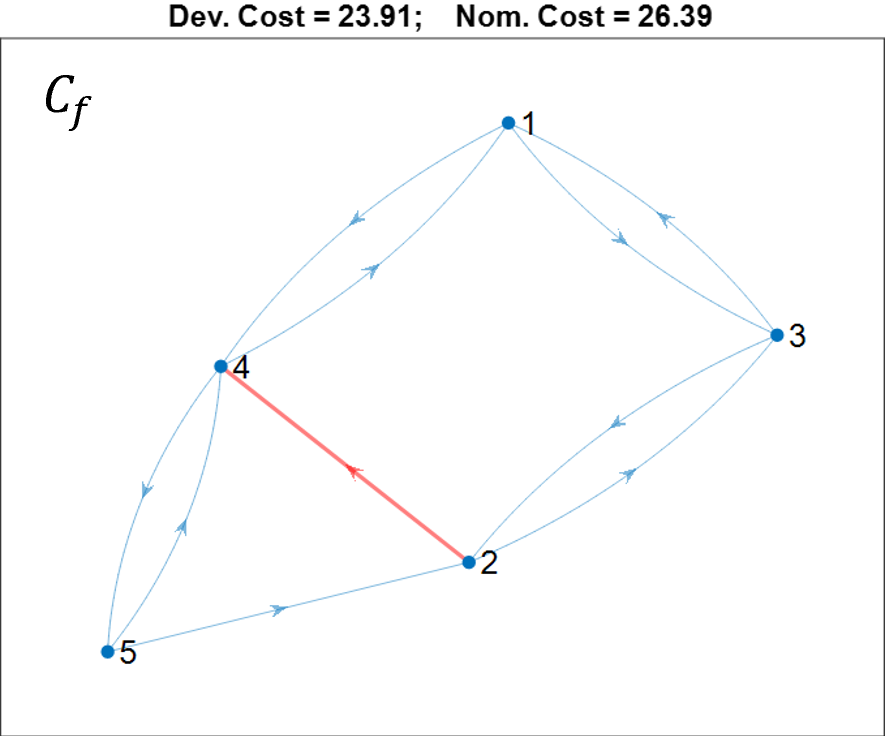}
        \caption{Under \textbf{DeTS} and $C_f$.}
        \label{Fig:NE_Fixed_Linear}
    \end{subfigure}
    \hfill
    \begin{subfigure}[t]{0.23\textwidth}
        \centering
        \captionsetup{justification=centering}
        \includegraphics[width=1.6in]{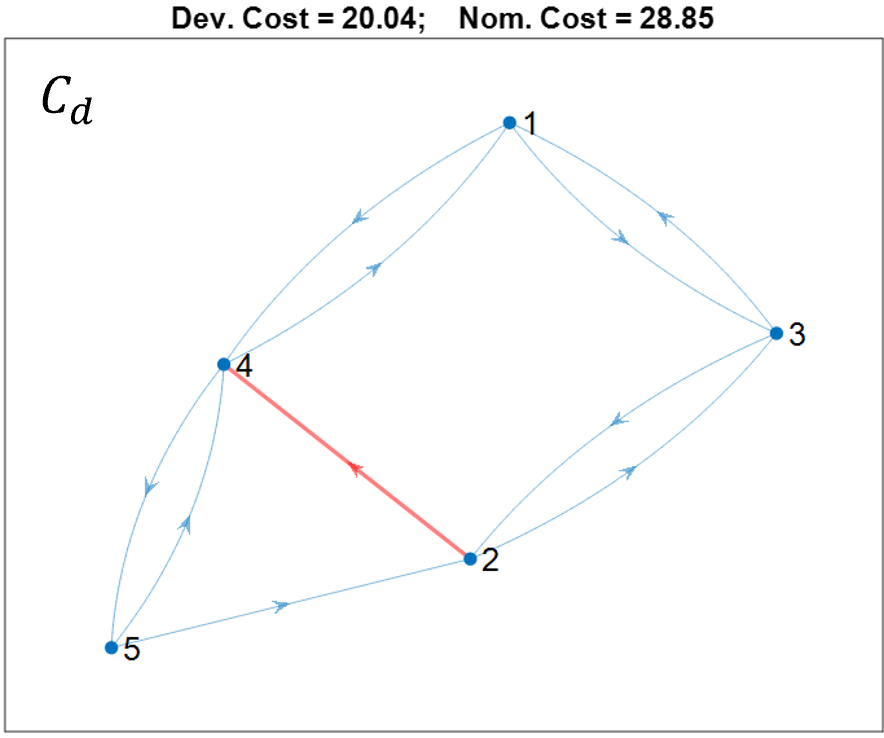}
        \caption{Under \textbf{DeTS} and $C_d$.}
        \label{Fig:NE_Dist_Linear}
    \end{subfigure}
    \caption{Obtained optimal interconnection topologies for the considered linear networked system under different topology synthesis methods: \textbf{W-DiTS}, \textbf{S-DiTS}, and \textbf{DeTS}, under different interconnection cost matrics: $C_f$ and $C_d$. The blue, green, and red edges in the sub-figures represent the initial, newly added, and entirely removed interconnections, respectively. The titles of the sub-figures indicate the deviation and nominal cost values  as defined in \eqref{Eq:CentralizedCostFunctions}}
    \label{Fig:NumericalResults}
\end{figure}  

Based on the observed newly added (green-colored) and entirely removed (red-colored) edges with respect to the initial topology (blue-colored) in each scenario, we can clearly see the practical advantage of the proposed \textbf{DeTS} approach in this paper compared to both \textbf{W-DiTS} and \textbf{S-DiTS} methods adapted from \cite{WelikalaP52022}. In essence, the proposed DeTS approach has mainly resorted to removing a single edge from the initial topology to ensure the stability of the considered networked system. In contrast, the dissipativity-based approaches \textbf{W-DiTS} and \textbf{S-DiTS} have mainly resorted to creating several new interconnections to achieve the same goal. Note also that both such approaches may have also refined some existing interconnections (this is also implied by the change in the nominal cost observed in Fig. \ref{Fig:NumericalResults}f compared to that in Fig. \ref{Fig:NumericalResults}e).

Another interesting observation is that when using the interconnection cost matrix $C_d$ as opposed to $C_f$, the number of newly added edges decreases (particularly the lengthy ones, e.g., compare Figs. \ref{Fig:NumericalResults}ac with Figs. \ref{Fig:NumericalResults}bd). However, during the same process, the observed nominal cost increases owing to the internal changes required to stabilize the considered networked system without using additional new edges.

Note also that a similar reduction in the number of newly added edges occurs when we have more precise/stronger passivity information regarding the subsystems (e.g., compare Figs. \ref{Fig:NumericalResults}ab with Figs. \ref{Fig:NumericalResults}cd). This is because dissipativity-based topology synthesis \cite{WelikalaP52022} becomes more flexible (as opposed to becoming more constrained/conservative) when underlying subsystems are strongly passive. 
This can also be understood by the fact that strongly passive systems not only can easily be stabilized but also can tolerate other connected weakly passive subsystems (due to the compositionality of passivity). Nevertheless, similar to before, an increment in the nominal cost can be seen due to the internal changes required to achieve stability without using additional new edges.  

In terms of the deviation cost \eqref{Eq:CentralizedCostFunctions}, when using $C_f$, the \textbf{S-DiTS} approach has provided the best performance. Note, however, that, in this case, the proposed \textbf{DeTS} approach performs closely to the \textbf{S-DiTS} approach while also having a better nominal cost. Moreover, when using $C_d$, the proposed \textbf{DeTS} approach performs the best (which is also the overall best deviation cost value observed in this case study). 

We conclude this paper by acknowledging some unique attributes of the \textbf{S-DiTS} approach adapted from \cite{WelikalaP52022} as opposed to the \textbf{DeTS} approach proposed in this paper. Let us consider the amount of information required to synthesize topologies under each approach. First, note that both these approaches use the initial interconnection system matrices $\{A_{ij}:i,j\in\N_N,i\neq j\}$ as a reference to penalize deviations from them. However, the intrinsic system matrices $\{A_{ii}\in\R^{n_i\times n_i}:i\in\N_N\}$ are only used in the proposed \textbf{DeTS} approach in this paper. In contrast, only two scalar passivity indices per each subsystem: $\{(\nu_i,\rho_i)\in\R^2: i\in\N_N\}$ are being used in the \textbf{S-DiTS} approach. In real-world scenarios, such scalar passivity indices can be estimated conveniently and accurately - compared to having to estimate the entire intrinsic system matrices. Moreover, as detailed in \cite{WelikalaP52022}, the \textbf{S-DITS} approach is applicable to a variety of networked systems comprised of non-linear subsystems.

\section{Conclusion}\label{Sec:Conclusion}
This paper considered networked systems comprised of interconnected linear subsystems and proposed a decentralized and compositional approach to stabilize or dissipativate such linear networked systems via synthesizing an optimal set of interconnections (topology) for the subsystems. The proposed topology synthesis approach was then extended to ensure the ability to stabilize or dissipativate linear networked systems using distributed feedback control. We gain this ability to optimally synthesize topologies by improving an existing decentralized and compositional approach for various analysis and controller synthesis tasks related to linear networked systems. The proposed topology synthesis approach only involves solving a sequence of linear matrix inequality problems - which can be implemented efficiently and scalably using standard convex optimization toolboxes. The presented case study showed that the proposed topology synthesis approach provides simplistic and high-performing solutions compared to an existing topology synthesis approach proposed for more general non-linear networked systems with limited information about the subsystems. Future work aims to closely study several critical real-world applications (e.g., supply chain networks, micro grids, vehicular platoons, and multi-robot systems) where the proposed topology synthesis approach can be directly applied to optimize existing networks in such applications.



\bibliographystyle{IEEEtran}
\bibliography{References}

\end{document}